\theoremstyle{plain}
\newtheorem{Thm}{Theorem}
\newtheorem{Cor}{Corollary}
\newtheorem{Prp}{Proposition}
\newtheorem{Lem}{Lemma}
\newtheorem{Asm}{Assumption}
\newtheorem{Def}{Definition}
\theoremstyle{remark}
\newtheorem{Exa}{Example}
\newcommand{\qedsymb}{\hfill\ensuremath{\blacksquare}}                 
\title{Self-aware Social Learning over Graphs
}
\author{Konstantinos Ntemos, Virginia Bordignon, Stefan Vlaski, and Ali H. Sayed
\thanks{Konstantinos Ntemos, Virginia Bordignon, and Ali H. Sayed are with the School of Engineering, Ecole Polytechnique F\'ed\'erale de Lausanne (EPFL). Stefan Vlaski is with Faculty of Engineering, Department of Electrical and Electronic Engineering, Imperial College London. E-mails: konstantinos.ntemos@epfl.ch, virginia.bordignon@epfl.ch, s.vlaski@imperial.ac.uk,  ali.sayed@epfl.ch.\\
This work was supported in part by the Swiss National Science Foundation
grant 205121-184999.
}}
\begin{document}

\maketitle
\begin{abstract}
    In this paper we study the problem of social learning under multiple true hypotheses and {\em self-interested} agents which exchange information over a graph. In this setup, each agent receives data that might be generated from a different hypothesis (or state) than the data other agents receive. In contrast to the related literature in social learning, which focuses on showing that the network achieves consensus, here we study the case where every agent is self-interested and wants to find the hypothesis that generates its own observations. However, agents do not know which ones of their peers wants to find the same state with them and as a result they do not know which agents they should cooperate with. To this end, we propose a scheme with {\em adaptive} combination weights and study the consistency of the agents' learning process. The scheme allows each agent to identify and collaborate with neighbors that observe the same hypothesis, while excluding  others, thus resulting in improved performance compared to both non-cooperative learning and cooperative social learning solutions. We analyze the asymptotic behavior of agents' beliefs under the proposed social learning algorithm and provide sufficient conditions that enable all agents to correctly identify their true hypotheses. The theoretical analysis is corroborated by numerical simulations.
\end{abstract}
\begin{IEEEkeywords}
social learning, self-interested agents, information diffusion.
\end{IEEEkeywords}
\section{Introduction}
{\em Social learning}  \cite{lalitha2014social, lalitha2018social,nedic2017fast, zhao2012learning,bordignon2020social, jadbabaie2012non,molavi2018theory,mitra2020new,ntemos2021social} refers to the distributed hypothesis testing problem where agents exchange information over a graph and aim at {\em learning} an unknown hypothesis of interest. Every agent has access to its own data (observations), as well as to information provided by its neighbors. Furthermore, every agent has access to some {\em likelihood functions} that provide the probability of every observation being generated from every possible hypothesis. Under the social learning paradigm every agent utilizes its own observations along with their likelihood functions to perform a Bayesian update of its {\em belief vector} (probability distribution over the possible hypotheses). Moreover, every agent uses a fusion rule (e.g., linear rule \cite{zhao2012learning, jadbabaie2012non}, log-linear rule \cite{lalitha2014social, lalitha2018social, nedic2017fast}) to incorporate the information the exchanged belief vectors from its neighbors. 
By using this procedure and under some  assumptions every agent's beliefs converge to the hypothesis that better explains all agents' models (i.e., likelihood functions). In this way, network consensus is achieved (i.e., all agents' beliefs converge to the same hypothesis).

The setup we consider in this work is close to the {\em conflicting hypotheses} setup considered in  \cite{nedic2017fast}, where the observations of each agent are generated according to some unknown distribution and the authors show that 
all agents' beliefs converge to the hypothesis that best ``explains" all agents' models.  
In contrast, in this work we are interested in studying the problem where  each agent wants to find its individual true hypothesis, instead of converging to a consensus. 

There are many reasons for which this problem is interesting, as in many cases consensus does not describe the system's behavior.  
Real-life social networks constitute one example, where there are disparate opinions among the various interacting parties. Another example is the scenario where a network of communicating classifiers uses the social learning protocol as in  \cite{bordignon2021network} to classify scenes from different classes. Finally, sensor networks where the agents receive observations generated from different sources is another example of interest.

One main challenge in this problem is the fact that the agents are unaware of which other agents want to find the same state with them. As a result, agents do not know which agents they should cooperate with and whose agents' shared information they should disregard. To tackle the problem, we use the idea that agents' cooperative beliefs are driven by agents' {\em private information}. More specifically, every agent use its private information (i.e., own observations) to form a {\em local belief} about its true state and exchanges this local belief with its neighbors. Based on these exchanged local beliefs the combination weights are formed in a way that is proportional to the probability that the agents want to find the same state. Our contributions are the following. 
\begin{enumerate}
    \item We propose a scheme with adaptive combination weights that utilizes the agents' private information and helps agents in identifying other agents that aim at finding the same hypotheses. In this way we extend the social learning algorithm proposed in \cite{lalitha2014social, lalitha2018social} for the problem of multiple true hypotheses and self-interested agents.
    \item We analyze the asymptotic behavior of agents' beliefs and characterize the agents' belief evolution at steady-state.
    \item We provide sufficient conditions under which the agents in the network manage to learn their true hypotheses.
\end{enumerate}

The problem we study is close to the problem of {\em multi-task learning over networks} studied in \cite{chen2015adaptive,chen2015diffusion,plata2017heterogeneous,zhao2015distributed,nassif2020multitask}. In \cite{chen2015adaptive, chen2015diffusion}, every agent aims at estimating its true parameter vector, which might be different from the target vector of other agents. The authors devise an adaptive combination policy to correctly identify the neighbors with which agents should cooperate to correctly estimate their true parameter vectors. The agents adapt their combination weights based on a Mean Square Deviation (MSD) criterion and a diffusion Least Mean Squares (LMS) algorithm is developed. A different approach was followed in \cite{zhao2015distributed}, where every agent keeps a stand-alone LMS estimate (updated based only on the agent's own signals and not on information from neighbors). At every time instant, every agent performs a binary hypothesis test to decide whether each of its neighbors is interested in the same parameter vector. Related formulations followed in \cite{plata2017heterogeneous,nassif2020multitask,teklehaymanot2015robust,bertrand2010distributed,bertrand2010distributed2,plata2015distributed,bogdanovic2014distributed} and references therein.

The aforementioned works focus on parameter estimation tasks with agents either being aware that they aim at identifying different parameter vectors or not. Here, we focus on the distributed hypothesis testing problem where every agent aims at identifying an underlying hypothesis of interest and is not aware which agents aim at finding the same hypothesis with it or a different one. Thus, our work is  closer to the multi-task decision problem studied in  \cite{marano2021decision}. In \cite{marano2021decision}, an LMS-type algorithm is devised. In contrast, here we study the social learning problem where every agent performs local Bayesian updates before exchanging information with its neighbors. 
Thus, our results neither imply, nor are implied by the results in \cite{marano2021decision}. An interesting result that comes out from our analysis  is the fact that {\em identifiability} (i.e., the ability of an agent to correctly distinguish among the different hypotheses) plays a crucial role on the outcome of the learning process over the network.
\subsection{Notation}
We use the notation $\overset{a.s.}\longrightarrow$ and $\overset{P.}\longrightarrow$ to denote almost sure convergence and convergence in probability, respectively. $\mathbb{I}_{s}$ denotes the indicator function which is equal to $1$ if the statement $s$ is true and $0$ otherwise. $\text{blockdiag}\{A_1,\ldots,A_n\}$ denotes the block-diagonal matrix composed of the matrices $A_1,\ldots,A_n$ and $\mathds{1}$ denotes the all-ones vector. $|\cdot|$ denotes the cardinality of a set.
\section{Problem Formulation}
We assume a set $\mathcal{N}=\{1,\ldots,N\}$ of agents interacting over a network, which is represented by an undirected graph $\mathcal{G}=\langle \mathcal{N}, \mathcal{E} \rangle$, where $\mathcal{E}$ includes bidirectional links between agents. The set of neighbors of an agent $k$ (including agent $k$) is denoted by  $\mathcal{N}_k$. In contrast to the usual setup, here we assume a heterogeneous setting, where there exist multiple true hypotheses that agents want to retrieve. The set of all possible hypotheses is denoted by $\Theta=\{\theta_1,\ldots,\theta_M\}$.

We assume that each agent $k$ has access to observations $\boldsymbol{\zeta}_{k,i}\in\mathcal{Z}_k$ at every time $i\geq1$. Agent $k$ also has access to the likelihood functions $L_k(\zeta_{k,i}|\theta)$, $\theta\in\Theta$. The signals $\boldsymbol{\zeta}_{k,i}$ are independent and identically distributed (i.i.d.) over time. In this work, the sets $\mathcal{Z}_k$ are assumed to be finite. 
We will use the notation $L_k(\theta)$ instead of $L_k(\boldsymbol{\zeta}_{k,i}\vert\theta)$ whenever it is clear from the context. Every agent $k$'s true hypothesis $\theta^{(k)}$ is drawn according to some probability $\mathbb{P}(\boldsymbol{\theta}^{(k)})$ 
initially and remains unchanged throughout the process. Agent $k$'s observations are generated according to the model 
\begin{align}
\boldsymbol{\zeta}_{k,i}\sim L_k(\boldsymbol{\zeta}_{k,i}\vert\boldsymbol{\theta}^{(k)}=\theta^{(k)}), \quad\boldsymbol{\theta}^{(k)}\in\Theta \end{align}
and the states $\boldsymbol{\theta}^{(k)}$ are independent across agents, meaning that $\mathbb{P}(\boldsymbol{\theta}^{(k)},\boldsymbol{\theta}^{(\ell)})=\mathbb{P}(\boldsymbol{\theta}^{(k)})\mathbb{P}(\boldsymbol{\theta}^{(\ell)})$.

Agents' observations are possibly generated by different hypotheses and each agent $k$ aims at finding the realization $\theta^{(k)}$ of its true hypothesis $\boldsymbol{\theta}^{(k)}\in\Theta$ according to which $\boldsymbol{\zeta}_{k,i}$ are created.

Agents share information with their neighbors in a distributed fashion. This information can be utilized to find the underlying true hypothesis by forming {\em beliefs}, which are probability distributions over the set of hypothesis $\Theta$. We consider the log-linear social learning rule \cite{lalitha2014social,bordignon2020social} where the agents update their beliefs, denoted by $\boldsymbol{\nu}_{k,i}$, in the following manner:
\begin{align}
\label{adapt}
    &\boldsymbol{\varphi}_{k,i}(\theta)=
    \frac{L_k(\boldsymbol{\zeta}_{k,i}|\theta)\boldsymbol{\nu}_{k,i-1}(\theta)}{\sum_{\theta'}L_k(\boldsymbol{\zeta}_{k,i}|\theta')\boldsymbol{\nu}_{k,i-1}(\theta')}
    \\
 \label{combine}
    &\boldsymbol{\nu}_{k,i}(\theta)=\frac{\prod_{\ell\in\mathcal{N}_k}(\boldsymbol{\varphi}_{\ell,i}(\theta))^{a_{\ell k}}}{\sum_{\theta'}\prod_{\ell\in\mathcal{N}_k}(\boldsymbol{\varphi}_{\ell,i}(\theta'))^{a_{\ell k}}}, \quad k\in\mathcal{N}
\end{align}
where $a_{\ell k}$ denotes the static (time-invariant) {\em combination weight} assigned by agent $k$ to neighboring agent $\ell$, satisfying $0<a_{\ell k}\leq1$, for all $\ell\in\mathcal{N}_k$, $a_{\ell k}=0$ for all $\ell\notin\mathcal{N}_k$ and $\sum_{\ell\in\mathcal{N}_k}a_{\ell k}=1$. Let $A$ denote the {\em combination matrix}, which consists of all combination weights with $[A]_{\ell k}=a_{\ell k}$. Clearly, $A$ is left-stochastic. 

It is known that if agents use the above algorithm, under the assumption of a strongly connected network (information flows from every agent to any other agent in the network and at least one agent has a self-loop, $a_{kk}>0$) \cite{Sayed_2014}, then the network achieves {\em consensus} \cite{lalitha2014social,nedic2017fast,zhao2012learning,lalitha2018social}, thus ruling out the possibility for agents with different true states to correctly identify them.

In order for agents to find their true state, they should evaluate over time if the received information from the neighbourhood is beneficial to them or not. This means that they have to decide if the information received from their neighbors should be taken into account in the information aggregation step \eqref{combine}. One way to do so is to dynamically adjust the combination weights according to whether agents believe each neighbors  aim at finding the same state or not.
\section{
Adaptive combination weights}
The idea is that the weights assigned by agent $k$ should be zero towards every neighbor that tries to find a different state than $\theta^{(k)}$. However, this information is not known beforehand. 
If an agent can identify its true state alone, then it might be better for that agent not to cooperate and just perform stand-alone Bayesian learning. In this way, it will be guaranteed to converge to its true hypothesis without being misled by other agents. However, some agents might not be able to find their true states alone. This happens when for an agent $k$, its true state $\theta^{(k)}$ is {\em observationally equivalent} to some other $\theta\neq\theta^{(k)}$. In that case this agent will be unable to find its true state without other agents' help. We define the set of states that are observationally equivalent to $\theta^{(k)}$ as follows. 
\begin{Def}\label{obs_equivalence}(\textbf{Observationally equivalent states}). The set
\begin{small}
\begin{align}
    \Theta^{\star}_k\triangleq\left\{\theta\in\Theta:L_k(\zeta_k\vert\theta)=L_k(\zeta_k\vert\theta^{(k)}),\quad\forall\zeta_k\in\mathcal{Z}_k\right\}
\end{align}
\end{small}%
is comprised of all states that are observationally equivalent to $\theta^{(k)}$ for an agent $k\in\mathcal{N}$.\qedsymb
\end{Def}
Note that $\theta^{(k)}$ is always contained in $\Theta^{\star}_k$. 
Before we introduce the adaptive combination weights mechanism, we provide a motivating example. In the network example presented in Fig. \ref{net_example} the set of possible hypotheses is $\Theta=\{\theta_1,\theta_2,\theta_3\}$ and the true hypotheses of agents $1,2,3$ are $\theta_2,\theta_2,\theta_3$, respectively. However, agent $1$ cannot distinguish between hypotheses $\theta_1$ and $\theta_2$. Since agent $1$ communicates with both agents $2$ and $3$, it may not converge to hypothesis $\theta_2$. 
However, if agent $1$ over time realizes that agent $3$'s true hypothesis is $\theta_3$ (i.e., it is different from agent $1$'s true hypothesis), then it can cut off the link with agent $3$ and find its true hypothesis with the aid of agent $2$ (which can find $\theta_2$ alone as $\Theta^{\star}_2=\{\theta_2\}$), provided that agent $2$ also realizes that its true hypothesis is different from agent $3$'s true hypothesis and cuts off its link to agent $3$ as well. Our goal is to devise an adaptive mechanism that enables agents to discriminate over time which agents aim at finding the same hypothesis with them against other agents.
\begin{figure}[!h]
\centering
\includegraphics[width=0.3\textwidth]{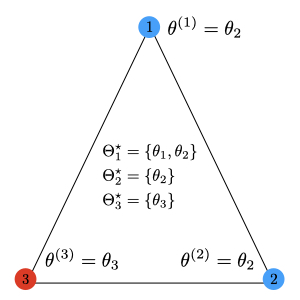}
\caption{A network example with three agents. The true state of agents $1,2$ is $\theta_2$, while the true state of agent $3$ is $\theta_3$.
\label{net_example}}
\end{figure}

In doing so, each agent at every time $i$ can form a {\em local belief} about the unknown hypothesis $\boldsymbol{\theta}^{(k)}$ based only on its own observations $\boldsymbol{\zeta}_{k,1:i}=(\boldsymbol{\zeta}_{k,1},\ldots,\boldsymbol{\zeta}_{k,i})$ until time $i$. These local beliefs do not contain any misleading information from other agents and they are given by
\begin{align}
    &\boldsymbol{\pi}_{k,i}(\theta)=\mathbb{P}(\boldsymbol{\theta}^{(k)}=\theta\vert\boldsymbol{\zeta}_{k,1:i}),\quad\theta\in\Theta
\end{align}
where $\boldsymbol{\pi}_{k,i}$ is the posterior belief over $\boldsymbol{\theta}^{(k)}$ given the sequence of private observations of agent $k$. The belief  $\boldsymbol{\pi}_{k,i}$ can be computed given $\boldsymbol{\pi}_{k,i-1}$ and $\boldsymbol{\zeta}_{k,i}$ recursively according to Bayes' rule:
\begin{align}
\label{HMM_1}
    \boldsymbol{\pi}_{k,i}(\theta)=\frac{L_k(\boldsymbol{\zeta}_{k,i}\vert\theta)\boldsymbol{\pi}_{k,i-1}(\theta)}{\sum_{\theta'\in\Theta}L_k(\boldsymbol{\zeta}_{k,i}\vert\theta')\boldsymbol{\pi}_{k,i-1}(\theta')}.
\end{align}
Now, we can design a scheme based on the local beliefs so that the weights assigned to every neighbor $\ell$ by agent $k$ evolve according to the probability that the two agents are trying to find the same hypothesis (i.e., $\theta^{(k)}=\theta^{(\ell)}$). Let us denote the event that the two agents have the same hypothesis by
    \begin{align}
    \displaystyle\mathcal{S}_{k\ell}\triangleq\{\theta^{(k)}=\theta^{(\ell)}\}=\underset{\theta\in\Theta}\cup\mathcal{S}_{k\ell}^{\theta},\quad k\neq\ell
        \end{align}
        where
        \begin{align}
        \mathcal{S}_{k\ell}^{\theta}\triangleq\{\boldsymbol{\theta}^{(k)}=\theta\cap\boldsymbol{\theta}^{(\ell)}=\theta\},\quad k\neq\ell
        \end{align}
        is the event that both agent $k$ and agent $\ell$'s true state is $\theta$: Since the $\mathcal{S}_{k\ell}^{\theta}$ are disjoint events for different $\theta$
        \begin{align}
        \label{sum_theta}
         \mathbb{P}(\underset{\theta\in\Theta}\cup\mathcal{S}_{k\ell}^{\theta})=\sum_{\theta\in\Theta}\mathbb{P}(\mathcal{S}_{k\ell}^{\theta})
        \end{align}
Obviously, the probability that agent $k$ has the same state with itself is $1$. 
Then, the weight each agent $k$ may assign to its neighbor $\ell$ can be set to
    \begin{small}%
    \begin{align}
    \label{weights_definition}
                \boldsymbol{a}_{\ell k,i}=\begin{cases}\frac{\mathbb{P}(\mathcal{S}_{k\ell}\vert\ \boldsymbol{\zeta}_{k,1:i},\boldsymbol{\zeta}_{\ell,1:i})}{\boldsymbol{\boldsymbol{\sigma}}_{k,i}},&\text{if }\ell\in\mathcal{N}^{\star}_k\\
     \frac{1}{\boldsymbol{\sigma}_{k,i}},&\text{if }\ell=k\\
        0,&\text{otherwise}
        \end{cases}
    \end{align}
        \end{small}%
    where 
    $\mathcal{N}^{\star}_k\triangleq\mathcal{N}_k\setminus\{k\}$ denotes the set of neighbors of agent $k$ without including $k$ and  $\boldsymbol{\sigma}_{k,i}
    $ is a normalizing factor to ensure that $\boldsymbol{A}_i$ is left-stochastic.
    
Construction \eqref{weights_definition} ensures that agent \( k \) incorporates information from agent \( \ell \) in a manner that is proportional to the probability that agents \( k \) and \( \ell \) are observing the same state. As agents gain confidence in their true state over time, this allows them to exclude inconsistent information, and collaborate only with agents who observe data that do not conflict with their local models. 

In the sequel, we first show that agents are able to efficiently compute \( \mathbb{P}(\mathcal{S}_{k \ell}\vert\boldsymbol{\zeta}_{k,1:i},\boldsymbol{\zeta}_{\ell,1:i}) \) and then establish formally that the resulting learning process is consistent.
\begin{Lem} \label{lem1}{\bf (Conditional probability of two agents sharing the same hypothesis)}. 
The probability of two  distinct agents $k,\ell$ having the same state conditioned on the joint observations $\boldsymbol{\zeta}_{k,1:i},\boldsymbol{\zeta}_{\ell,1:i}$ is given by
\begin{align}
\label{probability_same_model_neighbors}
    \mathbb{P}(\mathcal{S}_{k\ell}\vert\boldsymbol{\zeta}_{k,1:i},\boldsymbol{\zeta}_{\ell,1:i})=\sum_{\theta}\boldsymbol{\pi}_{k,i}(\theta)\boldsymbol{\pi}_{\ell,i}(\theta)
\end{align}
\end{Lem}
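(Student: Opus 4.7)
The plan is to exploit the disjoint decomposition $\mathcal{S}_{k\ell} = \cup_{\theta \in \Theta}\mathcal{S}_{k\ell}^{\theta}$ together with the probabilistic assumptions already stated in the problem formulation, namely the independence of the true states across agents, $\mathbb{P}(\boldsymbol{\theta}^{(k)},\boldsymbol{\theta}^{(\ell)})=\mathbb{P}(\boldsymbol{\theta}^{(k)})\mathbb{P}(\boldsymbol{\theta}^{(\ell)})$, and the fact that each $\boldsymbol{\zeta}_{k,i}$ is generated only from $\boldsymbol{\theta}^{(k)}$ (so $\boldsymbol{\zeta}_{k,1:i}$ depends only on $\boldsymbol{\theta}^{(k)}$, and similarly for $\ell$), with the signals of different agents being conditionally independent given the states.

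First, by the disjointness of the events $\mathcal{S}_{k\ell}^{\theta}$ (noted already in \eqref{sum_theta}), I would write
\begin{align*}
\mathbb{P}(\mathcal{S}_{k\ell}\mid\boldsymbol{\zeta}_{k,1:i},\boldsymbol{\zeta}_{\ell,1:i})
= \sum_{\theta\in\Theta} \mathbb{P}(\boldsymbol{\theta}^{(k)}=\theta,\boldsymbol{\theta}^{(\ell)}=\theta\mid\boldsymbol{\zeta}_{k,1:i},\boldsymbol{\zeta}_{\ell,1:i}).
\end{align*}
Next, I would apply Bayes' rule to each summand, yielding a ratio whose numerator is the joint likelihood of $(\boldsymbol{\zeta}_{k,1:i},\boldsymbol{\zeta}_{\ell,1:i})$ given the two states together with the joint prior $\mathbb{P}(\boldsymbol{\theta}^{(k)}=\theta,\boldsymbol{\theta}^{(\ell)}=\theta)$, and whose denominator is the unconditional joint distribution of the observations.

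The key step is to factor both numerator and denominator. The joint prior factors by the stated independence of states across agents, the joint likelihood factors because $\boldsymbol{\zeta}_{k,1:i}$ depends only on $\boldsymbol{\theta}^{(k)}$ (and the observation sequences across agents are conditionally independent given the states), and these same two facts imply that the marginal joint $p(\boldsymbol{\zeta}_{k,1:i},\boldsymbol{\zeta}_{\ell,1:i})$ also factors as $p(\boldsymbol{\zeta}_{k,1:i})\,p(\boldsymbol{\zeta}_{\ell,1:i})$ after marginalizing over the independent states. Thus each summand collapses to
\begin{align*}
\frac{L_k^{(i)}(\theta)\mathbb{P}(\boldsymbol{\theta}^{(k)}=\theta)}{p(\boldsymbol{\zeta}_{k,1:i})}\cdot\frac{L_\ell^{(i)}(\theta)\mathbb{P}(\boldsymbol{\theta}^{(\ell)}=\theta)}{p(\boldsymbol{\zeta}_{\ell,1:i})}
= \mathbb{P}(\boldsymbol{\theta}^{(k)}=\theta\mid\boldsymbol{\zeta}_{k,1:i})\,\mathbb{P}(\boldsymbol{\theta}^{(\ell)}=\theta\mid\boldsymbol{\zeta}_{\ell,1:i}),
\end{align*}
which by definition of the local posterior is exactly $\boldsymbol{\pi}_{k,i}(\theta)\boldsymbol{\pi}_{\ell,i}(\theta)$. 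Summing over $\theta$ gives \eqref{probability_same_model_neighbors}.

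There is no real obstacle here beyond bookkeeping; the only subtle point is to be explicit that unconditional independence of the observation sequences $\boldsymbol{\zeta}_{k,1:i}$ and $\boldsymbol{\zeta}_{\ell,1:i}$ (not merely conditional independence given the states) is a genuine consequence of the model, since both state sequences are independent and each observation sequence is driven only by its own state. That justifies the clean product factorization of the denominator and is the one step worth spelling out carefully.
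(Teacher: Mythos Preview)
Your proposal is correct and follows essentially the same route as the paper's proof: decompose $\mathcal{S}_{k\ell}$ into the disjoint events $\mathcal{S}_{k\ell}^{\theta}$, apply Bayes' rule to each summand, and factor both numerator and denominator using the independence of states across agents together with the conditional independence of each observation sequence given its own state. The paper carries out the denominator factorization by writing it explicitly as a double sum over $(\bar{\theta}^{(k)},\bar{\theta}^{(\ell)})$ and then splitting it into a product of two single sums, whereas you phrase this as ``unconditional independence of $\boldsymbol{\zeta}_{k,1:i}$ and $\boldsymbol{\zeta}_{\ell,1:i}$''; these are the same observation and your remark that this step deserves to be spelled out is well taken.
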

\begin{proof}
{\em See} Appendix \ref{a1}.
\end{proof}
Utilizing Lemma \ref{lem1}, the normalizing factor is given by
\begin{align}
    \label{normalizing_factor}
        &\boldsymbol{\sigma}_{k,i}
        =1+\sum_{\ell\in\mathcal{N}^{\star}_k}\sum_{\theta\in\Theta}\boldsymbol{\pi}_{k,i}(\theta)\boldsymbol{\pi}_{\ell,i}(\theta).
    \end{align}
Note that according to construction \eqref{weights_definition}, we have that $\boldsymbol{a}_{kk,i}>0$ for all $i\geq1$ and for all $k\in\mathcal{N}$.  In order to account for the information from local beliefs, agents perform two parallel updates. A non-cooperative update, in which the local belief $\boldsymbol{\pi}_{k,i}$ is formed by using \eqref{HMM_1}, which is then shared with every neighbor of $k$; and a social learning update.  
The novel part introduced in the social learning algorithm \eqref{adapt}, \eqref{combine} is that the combination step utilizes the adaptive combination weights $\boldsymbol{A}_i$ instead of static weights. More specifically, every agent $k\in\mathcal{N}$ updates its cooperative belief $\boldsymbol{\mu}_{k,i}$ according to the following procedure:
\begin{align}
\label{adapt_mod}
&\boldsymbol{\psi}_{k,i}(\theta)=
    \frac{L_k(\boldsymbol{\zeta}_{k,i}|\theta)\boldsymbol{\mu}_{k,i-1}(\theta)}{\sum_{\theta'}L_k(\boldsymbol{\zeta}_{k,i}|\theta')\boldsymbol{\mu}_{k,i-1}(\theta')},\quad k\in\mathcal{N}\\
\label{combine_mod}
    &\boldsymbol{\mu}_{k,i}(\theta)=\frac{\prod_{\ell\in\mathcal{N}_k}\boldsymbol{\psi}^{\boldsymbol{a}_{\ell k,i}}_{\ell,i}(\theta)}{\sum_{\theta'}\prod_{\ell\in\mathcal{N}_k}\boldsymbol{\psi}^{\boldsymbol{a}_{\ell k,i}}_{\ell,i}(\theta')}, \quad k\in\mathcal{N}.
\end{align}
We call $\boldsymbol{\mu}_{k,i}$ {\em cooperative beliefs}. For simplicity, and since agents do not have any prior evidence on their true state, we impose the following assumption on the prior local beliefs $\pi_{k,0}(\theta)$ and prior cooperative beliefs $\mu_{k,0}(\theta)$.
\begin{Asm}\label{prior_beliefs}(\textbf{Uniform prior beliefs}). The prior beliefs of all agents are uniform
\begin{align}
    \pi_{k,0}(\theta)=\mu_{k,0}(\theta)=1/|\Theta|,\quad k\in\mathcal{N}, \theta\in\Theta.
\end{align}\qedsymb
\end{Asm}
\subsection{Behavior of Adaptive Weights}
The behavior of the adaptive weights depends on the evolution of local beliefs. The next result characterizes the behavior of local beliefs $\boldsymbol{\pi}_{k,i}$ over time. Before presenting the result, we impose the following technical assumption \cite{nedic2015nonasymptotic}.
\begin{Asm}
\label{non_emptysupport}
{\bf(Likelihood functions with full support)}. $L_k(\zeta\vert\theta)>\alpha$ for some $\alpha>0$ for all $\zeta\in\mathcal{Z}_k$ and for all $\theta\in\Theta$.\qedsymb
\end{Asm}
From \eqref{probability_same_model_neighbors}, it follows that the ability of agent \( k \) to correctly reject inconsistent information from its neighbor \( \ell \) is driven by its ability to reject inconsistent states \( \theta \notin \Theta_k^{\star} \) through \( \boldsymbol{\pi}_{k, i}(\theta) \). We begin by studying its evolution.
\begin{Prp}
\label{rate_localbeliefs} {\bf (Rate of rejection of false hypotheses for local beliefs)}. Under Assumptions \ref{prior_beliefs} and \ref{non_emptysupport}, for all $\theta\notin\Theta^{\star}_k$ the following is true:
\begin{align}
&\mathbb{P}\left(\boldsymbol{\pi}_{k,i}(\theta)\geq\exp\left(x_ki\right)\right)\leq\exp\left(-y_ki\right)
\end{align}
where 
\begin{align}
    &x_k\triangleq-\frac{1}{2}\min_{\theta\notin\Theta^{\star}_k}d_k(\theta)\\
    &y_k\triangleq-\frac{\min_{\theta\notin\Theta^{\star}_k}d^2_k(\theta)}{8(\log\alpha)^2}
\end{align}
and $\alpha$ is given by Assumption \ref{non_emptysupport} and
\begin{align}
    d_k(\theta)\triangleq D_{KL}\Big(L_k(
\theta^{(k)})||L_k(
\theta)\Big)
\end{align}
denotes the KL divergence for an agent $k\in\mathcal{N}$ between $L_k(
\theta^{(k)})$ and $L_k(
\theta)$.
\end{Prp}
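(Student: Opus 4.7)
The plan is to reduce the statement to a tail bound on a sum of i.i.d.\ bounded log-likelihood ratios, then invoke Hoeffding's inequality. First, I would unroll the Bayes recursion \eqref{HMM_1} under the uniform prior (Assumption \ref{prior_beliefs}) to obtain the closed-form expression
\begin{align}
\boldsymbol{\pi}_{k,i}(\theta)=\frac{\prod_{j=1}^{i}L_k(\boldsymbol{\zeta}_{k,j}\vert\theta)}{\sum_{\theta'\in\Theta}\prod_{j=1}^{i}L_k(\boldsymbol{\zeta}_{k,j}\vert\theta')}.
\end{align}
The normalizer is inconvenient to control directly, so the standard trick is to upper bound it by a single summand, namely the true state $\theta^{(k)}$. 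Concretely, since $\boldsymbol{\pi}_{k,i}(\theta^{(k)})\leq 1$, we have
\begin{align}
\boldsymbol{\pi}_{k,i}(\theta)\;\leq\;\frac{\boldsymbol{\pi}_{k,i}(\theta)}{\boldsymbol{\pi}_{k,i}(\theta^{(k)})}\;=\;\prod_{j=1}^{i}\frac{L_k(\boldsymbol{\zeta}_{k,j}\vert\theta)}{L_k(\boldsymbol{\zeta}_{k,j}\vert\theta^{(k)})}.
\end{align}
Taking logarithms converts the event $\{\boldsymbol{\pi}_{k,i}(\theta)\ge e^{x_k i}\}$ into an event of the form $\{S_i\ge x_k i\}$, where $S_i\triangleq\sum_{j=1}^{i}X_j$ with $X_j\triangleq\log\big(L_k(\boldsymbol{\zeta}_{k,j}\vert\theta)/L_k(\boldsymbol{\zeta}_{k,j}\vert\theta^{(k)})\big)$ i.i.d.

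Next I would identify the mean and range of $X_j$. Under $\boldsymbol{\zeta}_{k,j}\sim L_k(\cdot\vert\theta^{(k)})$, by definition $\mathbb{E}[X_j]=-d_k(\theta)$. Assumption \ref{non_emptysupport} guarantees $\log L_k(\zeta\vert\theta)\in[\log\alpha,0]$ for every state and observation, hence $X_j\in[\log\alpha,-\log\alpha]$ and its range is bounded by $R=2|\log\alpha|$. Writing the target event as a deviation from the mean,
\begin{align}
\mathbb{P}\!\left(S_i\ge x_k i\right)=\mathbb{P}\!\left(S_i-\mathbb{E}[S_i]\ge \big(x_k+d_k(\theta)\big)i\right),
\end{align}
and using $\theta\notin\Theta_k^{\star}$ together with the definition $x_k=-\tfrac{1}{2}\min_{\theta'\notin\Theta_k^{\star}}d_k(\theta')$, I get $x_k+d_k(\theta)\ge \tfrac{1}{2}d_k(\theta)\ge \tfrac{1}{2}\min_{\theta'\notin\Theta_k^{\star}}d_k(\theta')>0$, so the event is a genuine upper-tail event away from the mean.

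Finally, Hoeffding's inequality applied to $S_i$ with bounded increments of range $R=2|\log\alpha|$ gives
\begin{align}
\mathbb{P}\!\left(S_i-\mathbb{E}[S_i]\ge i t\right)\le \exp\!\left(-\frac{2it^2}{R^2}\right)=\exp\!\left(-\frac{it^2}{2(\log\alpha)^2}\right),
\end{align}
and plugging $t=x_k+d_k(\theta)\ge \tfrac{1}{2}\min_{\theta'\notin\Theta_k^{\star}}d_k(\theta')$ yields the claimed rate (up to the sign convention) of $\exp(-y_k i)$ with $|y_k|=\min_{\theta'\notin\Theta_k^{\star}}d_k^2(\theta')/[8(\log\alpha)^2]$. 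The main conceptual step is the likelihood-ratio upper bound that sidesteps the normalizer; after that, the argument is a textbook Hoeffding estimate, and the only care needed is keeping track of the worst-case $\theta\notin\Theta_k^{\star}$ so that both $x_k$ and $y_k$ are uniform over such $\theta$.
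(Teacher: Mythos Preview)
Your proposal is correct and follows essentially the same route as the paper: bound $\boldsymbol{\pi}_{k,i}(\theta)$ by the likelihood ratio $\boldsymbol{\pi}_{k,i}(\theta)/\boldsymbol{\pi}_{k,i}(\theta^{(k)})$, unroll into an i.i.d.\ sum of bounded log-likelihood ratios with mean $-d_k(\theta)$ and range $2|\log\alpha|$, then apply a concentration inequality with the threshold $\varepsilon=-\tfrac{i}{2}\min_{\theta\notin\Theta^{\star}_k}d_k(\theta)$. The only cosmetic difference is that the paper invokes McDiarmid's inequality while you invoke Hoeffding's; for i.i.d.\ bounded summands these coincide and yield the identical constant $1/[8(\log\alpha)^2]$.
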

\begin{proof}
{\em See} Appendix \ref{a2}.
\end{proof}

Based on the evolution of the local beliefs \( \boldsymbol{\pi}_{k, i}(\theta) \), we can now 
investigate the behavior of the adaptive weights. 
The following result characterizes the asymptotic behavior of the adaptive combination weights.
\begin{Thm} \label{limiting_weights}{\bf (Limiting behavior of the adaptive combination weights)}. 
The adaptive combination weights exhibit the following limiting behavior as $i\to\infty$ for every agent $k\in\mathcal{N}$:
    \begin{align}
    \label{limiting_weightseq}
    \boldsymbol{a}_{\ell k,i}\overset{\text{a.s.}}\longrightarrow\begin{cases}\frac{\eta_{k\ell}}{1+\sum_{\ell'\in\mathcal{N}^{\star}_k}\eta_{k\ell'}},&\text{if }\ell\in\mathcal{N}^{\star}_k\\
            1-\sum_{\ell''\in\mathcal{N}^{\star}_k}\frac{\eta_{k\ell''}}{1+\sum_{\ell'\in\mathcal{N}^{\star}_k}\eta_{k\ell'}},&\text{if }\ell=k\\
        0,&\text{otherwise}
        \end{cases}
\end{align}
where $\eta_{k\ell}\triangleq\frac{|\Theta^{\star}_k\cap\Theta^{\star}_{\ell}|}{|\Theta^{\star}_k||\Theta^{\star}_{\ell}|}$.
\end{Thm}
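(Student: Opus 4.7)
The plan is to reduce the weight convergence to a pointwise almost-sure statement about the local beliefs, since by Lemma~\ref{lem1} the weights are explicit continuous functions of $\{\boldsymbol{\pi}_{k,i}(\theta)\}$. Concretely, I aim to prove that for every $k\in\mathcal{N}$ and every $\theta\in\Theta$,
\begin{align*}
\boldsymbol{\pi}_{k,i}(\theta)\;\overset{\text{a.s.}}{\longrightarrow}\;\frac{\mathbb{I}_{\theta\in\Theta^{\star}_k}}{|\Theta^{\star}_k|},
\end{align*}
and then push this limit through the finite sums defining $\boldsymbol{a}_{\ell k,i}$ and $\boldsymbol{\sigma}_{k,i}$.

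The first ingredient is a deterministic symmetry property: under Assumption~\ref{prior_beliefs}, the recursion \eqref{HMM_1} preserves the equality $\boldsymbol{\pi}_{k,i}(\theta)=\boldsymbol{\pi}_{k,i}(\theta')$ for every $\theta,\theta'\in\Theta^{\star}_k$ and every $i$. Indeed, by Definition~\ref{obs_equivalence} the likelihoods $L_k(\cdot|\theta)$ and $L_k(\cdot|\theta')$ coincide on $\mathcal{Z}_k$, so at each step the numerators in \eqref{HMM_1} are identical and the denominator is common; starting from a uniform prior the equality is maintained for all $i\geq 0$. The second ingredient upgrades Proposition~\ref{rate_localbeliefs} to almost-sure convergence: since $e^{-y_k i}$ is summable in $i$ and the cardinalities $|\mathcal{N}|$ and $|\Theta|$ are finite, a union bound combined with the first Borel--Cantelli lemma yields $\boldsymbol{\pi}_{k,i}(\theta)<e^{x_k i}$ eventually a.s., for every $k\in\mathcal{N}$ and every $\theta\notin\Theta^{\star}_k$. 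Since $x_k<0$, this gives $\boldsymbol{\pi}_{k,i}(\theta)\overset{\text{a.s.}}{\longrightarrow}0$ off $\Theta^{\star}_k$; combining with the symmetry and the normalization $\sum_\theta\boldsymbol{\pi}_{k,i}(\theta)=1$ delivers the displayed limit.

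Substituting into Lemma~\ref{lem1} and applying the continuous mapping theorem to the finite sum over $\theta$ yields
\begin{align*}
\mathbb{P}(\mathcal{S}_{k\ell}\vert\boldsymbol{\zeta}_{k,1:i},\boldsymbol{\zeta}_{\ell,1:i})\;\overset{\text{a.s.}}{\longrightarrow}\;\sum_{\theta\in\Theta^{\star}_k\cap\Theta^{\star}_{\ell}}\frac{1}{|\Theta^{\star}_k|\,|\Theta^{\star}_{\ell}|}=\eta_{k\ell},
\end{align*}
so by \eqref{normalizing_factor}, $\boldsymbol{\sigma}_{k,i}\overset{\text{a.s.}}{\longrightarrow}1+\sum_{\ell'\in\mathcal{N}^{\star}_k}\eta_{k\ell'}$. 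The denominator is uniformly bounded below by $1$, so dividing and applying the continuous mapping theorem once more yields the three cases of \eqref{limiting_weightseq}; the self-weight case uses the algebraic identity $1/(1+\sum_{\ell'}\eta_{k\ell'})=1-\sum_{\ell''}\eta_{k\ell''}/(1+\sum_{\ell'}\eta_{k\ell'})$, consistent with $\boldsymbol{A}_i$ being left-stochastic. The main conceptual step is the deterministic symmetry on $\Theta^{\star}_k$: it is what prevents the local beliefs from drifting off the uniform distribution on $\Theta^{\star}_k$ despite the random data generated within that set, and once it is in hand, everything downstream is arithmetic on almost-sure limits.
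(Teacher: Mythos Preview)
Your proposal is correct and follows essentially the same approach as the paper: establish the deterministic equality $\boldsymbol{\pi}_{k,i}(\theta)=\boldsymbol{\pi}_{k,i}(\theta')$ on $\Theta^{\star}_k$ from the uniform prior and identical likelihoods, combine this with $\boldsymbol{\pi}_{k,i}(\theta)\overset{\text{a.s.}}{\to}0$ off $\Theta^{\star}_k$ and normalization to obtain $\boldsymbol{\pi}_{k,i}(\theta)\overset{\text{a.s.}}{\to}1/|\Theta^{\star}_k|$ on $\Theta^{\star}_k$, and then push these limits through \eqref{probability_same_model_neighbors}, \eqref{normalizing_factor}, and \eqref{weights_definition}. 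Your argument is in fact slightly more careful than the paper's, which simply asserts the a.s.\ convergence from Proposition~\ref{rate_localbeliefs} without spelling out the Borel--Cantelli step you provide.
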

\begin{proof}{\em See} Appendix \ref{a4}.
\end{proof}
We observe that if an agent $k$ can identify its true hypothesis alone (i.e., $\Theta^{\star}_k=\{\theta^{(k)}\}$), then it will assign asymptotically positive weights only to the neighbors $\ell\in\mathcal{N}^{\star}_k$ for which $\theta^{(k)}$ is within their optimal hypothesis set (i.e., $\theta^{(k)}\in\Theta^{\star}_{\ell}$). We see here the implications of the identifiability capabilities of the agents. For example, if all agents can identify their true hypothesis alone (i.e., $\Theta^{\star}_{k}=\{\theta^{(k)}\}$ for all $k\in\mathcal{N}$), then the network will (asymptotically) decompose into components where every agent communicates only with the neighbors that aim at finding the same hypothesis with it.
\section{Analysis of the algorithm}
In this section we examine whether 
the adaptive combination scheme is sufficient to drive the agents' cooperative beliefs $\boldsymbol{\mu}_{k,i}$ to the individually correct hypotheses. First, we observe from Theorem \ref{limiting_weights} that the combination matrix $\boldsymbol{A}_i$ converges to a limiting matrix $A_{\infty}$ with elements $[A_{\infty}]_{\ell k}=a_{\ell k,\infty}$, defined as
\begin{align}
    A_{\infty}\triangleq\lim_{i\to\infty}\boldsymbol{A}_i.
\end{align}
In order to study the evolution of the cooperative beliefs $\boldsymbol{\mu}_{k,i}$ generated by our proposed algorithm with adaptive combination weights \eqref{adapt_mod}-\eqref{combine_mod}, we will show that they track the evolution of beliefs generated by the algorithm \eqref{adapt}-\eqref{combine} with the steady-state combination matrix $A_{\infty}$, which is much simpler to analyze. More specifically, we will show that asymptotically $\boldsymbol{\mu}_{k,i}$ tracks $\boldsymbol{\mu}^c_{k,i}$ which is given by 
\begin{align}
\label{adapt_centralized}
&\boldsymbol{\psi}^c_{k,i}(\theta)=
    \frac{L_k(\boldsymbol{\zeta}_{k,i}|\theta)\boldsymbol{\mu}^c_{k,i-1}(\theta)}{\sum_{\theta'}L_k(\boldsymbol{\zeta}_{k,i}|\theta')\boldsymbol{\mu}^c_{k,i-1}(\theta')},\quad k\in\mathcal{N}
    \\
\label{combine_centralized}
    &\boldsymbol{\mu}^c_{k,i}(\theta)=\frac{\prod_{\ell\in\mathcal{N}_k}(\boldsymbol{\psi}^c_{\ell,i}(\theta))^{a_{\ell k,\infty}}}{\sum_{\theta'}\prod_{\ell\in\mathcal{N}_k}(\boldsymbol{\psi}^c_{\ell,i}(\theta'))^{a_{\ell k,\infty}}},\quad k\in\mathcal{N}.
\end{align}
The evolution of the beliefs generated by \eqref{adapt_centralized}-\eqref{combine_centralized} has been analyzed for a time-invariant combination matrix for both cases of strongly-connected \cite{lalitha2014social,bordignon2020social} and weakly-connected networks \cite{matta2019graph}. First we prove a useful Lemma that characterizes the structure of $A_{\infty}$.
\begin{Lem}\label{A_inf}
\textbf{(Structure of }$A_{\infty}$\textbf{)}. $A_{\infty}$ is comprised of $S\in\mathbb{N}$ disjoint strongly-connected components, meaning
\begin{align}
    A_{\infty}=\begin{pmatrix}
&A_{\infty,1} & \dots &\boldsymbol{0}\\
&\vdots &\ddots &\vdots \\
&\boldsymbol{0} & \dots & A_{\infty,S}
\end{pmatrix}.
    \end{align}
Then, we have
\begin{align}
\label{Per_eq}
    \bar{A}^{\mathsf{T}}_{\infty}\triangleq\lim_{t\to\infty}(A^{\mathsf{T}}_{\infty})^t=\text{blockdiag}\{p_1\mathds{1}^{\mathsf{T}},\ldots,p_S\mathds{1}^{\mathsf{T}}\}.
\end{align}
$p_s$, $s\in\{1,\ldots,S\}$ is the Perron eigenvector of $A_{\infty,s}$.
\end{Lem}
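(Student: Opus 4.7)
The plan is to read off the structure of $A_\infty$ from the explicit formula for its entries in Theorem~\ref{limiting_weights} and then apply the classical Perron--Frobenius theorem to each diagonal block. The key observation is that the support pattern of $A_\infty$ is determined by a symmetric and reflexive relation on $\mathcal{N}$, so its connected components partition the agents into groups on which $A_\infty$ acts independently.

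First I would analyze the support. From Theorem~\ref{limiting_weights}, $a_{\ell k, \infty} > 0$ for $\ell \neq k$ if and only if $\ell \in \mathcal{N}^{\star}_k$ and $\Theta^{\star}_k \cap \Theta^{\star}_\ell \neq \emptyset$, while $a_{kk,\infty} > 0$ for every $k$. Since $\mathcal{G}$ is undirected and set intersection is symmetric, the off-diagonal support of $A_\infty$ is symmetric in $(k,\ell)$. Let $\mathcal{C}_1, \ldots, \mathcal{C}_S$ be the connected components of the corresponding undirected graph on $\mathcal{N}$ (which carries a self-loop at every vertex). After relabeling agents so that those in the same component are contiguous, $A_\infty$ takes block-diagonal form with blocks $A_{\infty,s}$ of size $|\mathcal{C}_s|\times|\mathcal{C}_s|$, since no positive weight crosses between distinct components. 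Each block inherits left-stochasticity from $A_\infty$, because all the mass of a column anchored in $\mathcal{C}_s$ remains inside $\mathcal{C}_s$.

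Next I would establish primitivity of each block. Connectedness of $\mathcal{C}_s$ under the symmetric support relation is equivalent to strong connectedness of the induced directed graph, so $A_{\infty,s}$ is irreducible. The strictly positive self-loops $a_{kk,\infty}>0$ then ensure aperiodicity, so $A_{\infty,s}$ is primitive. The Perron--Frobenius theorem yields a unique (normalized) positive Perron eigenvector $p_s$ of $A_{\infty,s}$ and guarantees that $(A^{\mathsf{T}}_{\infty,s})^t$ converges as $t\to\infty$ to the rank-one matrix in the claimed form. Stacking these limits across the blocks produces the expression for $\bar A^{\mathsf{T}}_\infty$ in \eqref{Per_eq}.

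I do not anticipate any serious difficulty in this lemma; the only subtleties are verifying that the support pattern is symmetric (which relies both on $\mathcal{G}$ being undirected and on $\eta_{k\ell}$ being symmetric in $k,\ell$), and noting that the strictly positive self-weights $a_{kk,\infty}$ deliver aperiodicity for free, so that the Perron--Frobenius step is invoked without any additional hypothesis on the original graph.
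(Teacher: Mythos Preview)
Your proposal is correct and follows essentially the same route as the paper: establish that the off-diagonal support of $A_\infty$ is symmetric (using that $\mathcal{G}$ is undirected and that $\eta_{k\ell}$ is symmetric), conclude that the induced directed graph decomposes into disjoint strongly-connected components, and then invoke Perron--Frobenius on each block. Your version is in fact more explicit than the paper's, which omits the aperiodicity/primitivity verification and simply cites Perron--Frobenius; your remark that the strictly positive diagonal entries $a_{kk,\infty}>0$ supply aperiodicity is exactly the missing detail.
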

\begin{proof}
{\em See} Appendix \ref{a8}.
\end{proof}
Let us define the set $\bar{\mathcal{N}}_s$, $s\in\{1,\ldots,S\}$ as the set of agents whose combination weights comprise $A_{\infty,s}$. Furthermore, let us define for $\bar{\mathcal{N}}_s$ the {\em sub-network confidence} for a state $\theta\in\Theta$ as
\begin{align}
\label{confidence}
    C_s(\theta)\triangleq-\sum_{k\in\bar{\mathcal{N}}_s}p_s(k)D_{KL}(L_k(\theta^{(k)})||L_k(\theta)).
\end{align}
where $p_s(k)$ is the $k^{th}$ element of $p_s$ and let
\begin{align}
\label{confidence_max}
    \bar{\Theta}^{\star}_s\triangleq\left\{\theta^{\star}_s\triangleq\arg\max_{\theta\in\Theta}C_s(\theta)\right\}.
\end{align}
This set is comprised of the hypotheses that best describe the sub-network agents' observation models weighted by their centrality. Now, we can provide the main result, which characterizes the evolution of cooperative beliefs $\boldsymbol{\mu}_{k,i}$. 
\begin{Thm}\label{public_b_c_t}({\bf  Cooperative beliefs convergence and consistent learning}). For any agent $k\in\bar{\mathcal{N}}_s$:\vspace{-2.5mm}\begin{enumerate}
    \item The cooperative beliefs converge to $0$ in probability, meaning:
$\boldsymbol{\mu}_{k,i}(\theta)\overset{P.}\longrightarrow0$, 
for every $\theta\notin\bar{\Theta}^{\star}_s$.
\item Agent $k$ learns its true state, meaning 
    $\boldsymbol{\mu}_{k,i}(\theta^{(k)})\overset{P.}\longrightarrow1$, 
if 
$\bar{\Theta}^{\star}_s=\{\theta^{(k)}\}$.
\end{enumerate}
\end{Thm}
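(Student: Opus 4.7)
The plan is to split the proof into two pieces: (i) analyze the auxiliary cooperative beliefs $\boldsymbol{\mu}^c_{k,i}$ driven by the \emph{fixed} limiting matrix $A_{\infty}$, and (ii) show that the actual beliefs $\boldsymbol{\mu}_{k,i}$ track $\boldsymbol{\mu}^c_{k,i}$ asymptotically, so the convergence of the latter transfers to the former. Lemma~\ref{A_inf} is the structural hinge: because $A_{\infty}$ is block-diagonal with each block $A_{\infty,s}$ corresponding to a strongly-connected component $\bar{\mathcal{N}}_s$, the recursion \eqref{adapt_centralized}--\eqref{combine_centralized} restricted to agents in $\bar{\mathcal{N}}_s$ is exactly a standard log-linear social learning protocol on a strongly-connected subgraph with Perron vector $p_s$.

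For part~(i), I would apply the existing results of \cite{lalitha2014social,bordignon2020social,nedic2017fast} to each strongly-connected block separately. Taking log-ratios of \eqref{combine_centralized} for two states $\theta,\theta'$, iterating, and using the mixing property $\bar{A}^{\mathsf{T}}_{\infty}=\mathrm{blockdiag}\{p_1\mathds{1}^{\mathsf{T}},\ldots,p_S\mathds{1}^{\mathsf{T}}\}$ from \eqref{Per_eq}, a strong law of large numbers yields, for any $k\in\bar{\mathcal{N}}_s$,
\begin{equation*}
\frac{1}{i}\log\frac{\boldsymbol{\mu}^c_{k,i}(\theta^{\star}_s)}{\boldsymbol{\mu}^c_{k,i}(\theta)}\xrightarrow{\text{a.s.}} C_s(\theta^{\star}_s)-C_s(\theta),
\end{equation*}
which is strictly positive for every $\theta\notin\bar{\Theta}^{\star}_s$ by definition \eqref{confidence_max}. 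Hence $\boldsymbol{\mu}^c_{k,i}(\theta)\xrightarrow{\text{a.s.}}0$, and if $\bar{\Theta}^{\star}_s=\{\theta^{(k)}\}$ then necessarily $\boldsymbol{\mu}^c_{k,i}(\theta^{(k)})\xrightarrow{\text{a.s.}}1$.

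For part~(ii), I would work with the log-belief ratios $\boldsymbol{\lambda}_{k,i}(\theta,\theta')\triangleq\log(\boldsymbol{\mu}_{k,i}(\theta)/\boldsymbol{\mu}_{k,i}(\theta'))$ and their centralized counterpart $\boldsymbol{\lambda}^c_{k,i}(\theta,\theta')$. Unrolling \eqref{adapt_mod}--\eqref{combine_mod} and \eqref{adapt_centralized}--\eqref{combine_centralized} from the same uniform priors (Assumption~\ref{prior_beliefs}), the difference collapses to a telescoping expression of the form
\begin{equation*}
\boldsymbol{\lambda}_{k,i}-\boldsymbol{\lambda}^c_{k,i}=\sum_{t=1}^{i}\Big[\big(\boldsymbol{A}_t^{\mathsf{T}}\cdots\boldsymbol{A}_i^{\mathsf{T}}\big)-\big(A_{\infty}^{\mathsf{T}}\big)^{i-t+1}\Big]_{k,:}\,\boldsymbol{x}_t,
\end{equation*}
where $\boldsymbol{x}_t$ is the stacked vector of agents' log-likelihood ratios at time $t$, which is i.i.d.\ with finite mean and bounded second moments thanks to Assumption~\ref{non_emptysupport}. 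Because $\boldsymbol{A}_i\xrightarrow{\text{a.s.}}A_{\infty}$ by Theorem~\ref{limiting_weights} and all matrices involved are left-stochastic, an averaging/perturbation argument (splitting the sum into an early block of finitely many time indices and a tail block where $\|\boldsymbol{A}_t-A_{\infty}\|$ is uniformly small on a high-probability event) shows that $\frac{1}{i}(\boldsymbol{\lambda}_{k,i}-\boldsymbol{\lambda}^c_{k,i})\xrightarrow{P.}0$.

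Combining the two pieces, for $\theta\notin\bar{\Theta}^{\star}_s$ the ratio $\frac{1}{i}\log(\boldsymbol{\mu}_{k,i}(\theta^{\star}_s)/\boldsymbol{\mu}_{k,i}(\theta))$ converges in probability to $C_s(\theta^{\star}_s)-C_s(\theta)>0$, which implies $\boldsymbol{\mu}_{k,i}(\theta)\xrightarrow{P.}0$. When $\bar{\Theta}^{\star}_s=\{\theta^{(k)}\}$, summing over $\theta\neq\theta^{(k)}$ and using $\sum_{\theta}\boldsymbol{\mu}_{k,i}(\theta)=1$ gives $\boldsymbol{\mu}_{k,i}(\theta^{(k)})\xrightarrow{P.}1$. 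The main obstacle is cleanly controlling the tracking step: the time-varying combination matrices $\boldsymbol{A}_t$ act as a non-stationary linear operator on the log-belief trajectory, and one must ensure that the cumulative effect of the random perturbation $\boldsymbol{A}_t-A_{\infty}$, composed with the products of the remaining stochastic matrices, decays fast enough relative to the deterministic drift produced by the log-likelihoods—this is where Theorem~\ref{limiting_weights} (almost-sure, not just in-probability, convergence of the weights) is essential.
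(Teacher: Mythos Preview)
Your overall two-step strategy---first analyze the auxiliary beliefs $\boldsymbol{\mu}^c_{k,i}$ driven by the fixed $A_\infty$ on each strongly-connected block, then transfer the conclusion to $\boldsymbol{\mu}_{k,i}$ by controlling the difference of log-belief ratios---is exactly the paper's approach, and your treatment of part~(i) matches the paper's argument.

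The gap is in part~(ii). Almost-sure convergence of $\boldsymbol{A}_i$ to $A_\infty$ from Theorem~\ref{limiting_weights} is \emph{not} sufficient to close the tracking step; the paper relies on an \emph{exponential rate} of convergence of the weights, inherited from Proposition~\ref{rate_localbeliefs} and made explicit in the auxiliary Lemma~\ref{rate_weights}. To see why your early/tail split fails, note that on the tail event $\{\|\boldsymbol{A}_t-A_\infty\|<\epsilon\ \text{for all } t>T\}$ the standard telescoping bound on your product difference gives $\big\|\prod_{j}\boldsymbol{A}_j^{\mathsf T}-(A_\infty^{\mathsf T})^{i-t+1}\big\|_\infty\le(i-t+1)\epsilon$, so after dividing by $i$ the tail contributes roughly $\tfrac{L}{i}\sum_{t=T+1}^i(i-t+1)\epsilon\approx \tfrac{L\epsilon\, i}{2}$, which does not vanish. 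The paper avoids the full product unrolling and instead uses the one-step decomposition
\[
\boldsymbol{\lambda}_i-\boldsymbol{\lambda}^c_i=(\boldsymbol{A}_i^{\mathsf T}-A_\infty^{\mathsf T})\boldsymbol{\mathcal L}_i+\boldsymbol{A}_i^{\mathsf T}(\boldsymbol{\lambda}_{i-1}-\boldsymbol{\lambda}^c_{i-1})+(\boldsymbol{A}_i^{\mathsf T}-A_\infty^{\mathsf T})\boldsymbol{\lambda}^c_{i-1},
\]
iterates it (using $\|\boldsymbol{A}_i^{\mathsf T}\|_\infty=1$ and the crude bound $\|\boldsymbol{\lambda}^c_{t-1}\|_\infty\le L(t-1)$) to obtain
\[
\|\boldsymbol{\lambda}_i-\boldsymbol{\lambda}^c_i\|_\infty\ \le\ L\sum_{t=1}^i\|\boldsymbol{A}_t^{\mathsf T}-A_\infty^{\mathsf T}\|_\infty\,t,
\]
and then proves $\mathbb{E}\big[\sum_{t\ge 1}\|\boldsymbol{A}_t-A_\infty\|_\infty\,t\big]<\infty$ via the exponential concentration bounds of Lemma~\ref{rate_weights}. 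This yields the bounded-difference statement (Lemma~\ref{bounded_log_belief_ratios}), from which $\frac{1}{i}(\boldsymbol{\lambda}_{k,i}-\boldsymbol{\lambda}^c_{k,i})\to 0$ in $L^1$ and hence in probability. In short, the missing ingredient in your sketch is a quantitative rate for $\boldsymbol{A}_t\to A_\infty$; invoking Theorem~\ref{limiting_weights} alone leaves the cumulative perturbation uncontrolled.
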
\vspace{-2.5mm}
\begin{proof}{\em See} Appendix \ref{a7}.
\end{proof}
There is one more question of interest to answer. We see from the result above that whether an agent is able to learn its true state is dependent on the structure of the sub-network $\bar{\mathcal{N}}_s$. However, from Theorem  \ref{limiting_weights} we see that this structure depends on the identifiability capabilities of the agents (i.e., on the sets $\Theta^{\star}_k$) and on the graph topology given by $\mathcal{G}$. The following result provides conditions that guarantee that {\em every} agent in the network will find its true state.
\begin{Cor}\label{consistent_learning2}
\textbf{(Globally consistent learning)}. 
Under the proposed adaptive combination scheme, every agent $k\in\mathcal{N}$ learns its true state, meaning
\begin{align}
    \boldsymbol{\mu}_{k,i}(\theta^{(k)})\overset{P.}\longrightarrow1,\quad\forall k\in\mathcal{N}
\end{align}
if both of the following hold:
\begin{align}\label{c1}
&\Theta^{\star}_k\cap\Theta^{\star}_{\ell}=\emptyset,\forall k\in\mathcal{N},\forall \ell\in\mathcal{N}^{\star}_k\text{ such that  }\theta^{(k)}\neq\theta^{(\ell)}\\\label{c2}
&\underset{\ell\in\bar{\mathcal{N}}_s}\cap\Theta^{\star}_{\ell}=\{\theta^{(k)}\},\forall s\in\{1,\ldots,S\}\text{ such that }k\in\bar{\mathcal{N}}_s.
\end{align}
\end{Cor}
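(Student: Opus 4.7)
The plan is to reduce the corollary to an application of Theorem \ref{public_b_c_t}, which already guarantees $\boldsymbol{\mu}_{k,i}(\theta^{(k)}) \overset{P.}\longrightarrow 1$ provided that $\bar{\Theta}^{\star}_s = \{\theta^{(k)}\}$ for the unique component $\bar{\mathcal{N}}_s$ containing agent $k$. Thus I only need to show, under conditions \eqref{c1} and \eqref{c2}, that every such $\bar{\Theta}^{\star}_s$ collapses to the singleton $\{\theta^{(k)}\}$.

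The first step uses \eqref{c1} to constrain the structure of $A_\infty$. From Theorem \ref{limiting_weights}, for $\ell \in \mathcal{N}^{\star}_k$ the limiting weight $a_{\ell k,\infty}$ is proportional to $\eta_{k\ell} = |\Theta^{\star}_k \cap \Theta^{\star}_\ell|/(|\Theta^{\star}_k||\Theta^{\star}_\ell|)$, and condition \eqref{c1} forces $\eta_{k\ell} = 0$ whenever $\theta^{(k)} \neq \theta^{(\ell)}$. Consequently any positive off-diagonal entry of $A_\infty$ connects two neighbours of $\mathcal{G}$ that share the same true state; note also that the zero/non-zero pattern of $A_\infty$ is symmetric because $\eta_{k\ell} = \eta_{\ell k}$. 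Combining this with the strong connectivity of each block $A_{\infty,s}$ guaranteed by Lemma \ref{A_inf}, any two agents in $\bar{\mathcal{N}}_s$ are joined by a directed path of positive limiting weights, and chaining the pairwise equalities along such a path produces a common true state $\theta^{(s)}$ for the whole component.

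With this in hand, \eqref{confidence} simplifies: since $\theta^{(k)} = \theta^{(s)}$ for every $k \in \bar{\mathcal{N}}_s$,
\begin{equation*}
C_s(\theta) = -\sum_{k \in \bar{\mathcal{N}}_s} p_s(k)\, D_{KL}\!\left(L_k(\theta^{(s)})\,\|\,L_k(\theta)\right) \leq 0,
\end{equation*}
with equality iff $L_k(\cdot\mid\theta) = L_k(\cdot\mid\theta^{(s)})$ for every $k \in \bar{\mathcal{N}}_s$; here I use that $p_s$ is the Perron vector of a strongly-connected left-stochastic block and therefore has strictly positive entries. By Definition \ref{obs_equivalence} this is precisely the statement $\theta \in \bigcap_{k \in \bar{\mathcal{N}}_s} \Theta^{\star}_k$, so $\bar{\Theta}^{\star}_s = \bigcap_{k \in \bar{\mathcal{N}}_s} \Theta^{\star}_k$. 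Condition \eqref{c2} then yields $\bar{\Theta}^{\star}_s = \{\theta^{(k)}\}$, and Theorem \ref{public_b_c_t} closes the argument.

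The main obstacle is the passage from the pairwise decoupling supplied by \eqref{c1} to the component-wide claim that every agent in $\bar{\mathcal{N}}_s$ shares a single true state: one must invoke Lemma \ref{A_inf} for strong connectivity, observe that $A_\infty$ inherits the edge set of the underlying graph $\mathcal{G}$, and propagate the two-agent equality along directed paths. Once that is secured, the remainder of the proof is essentially Gibbs' inequality applied to $C_s(\theta)$, followed by the already-established convergence in Theorem \ref{public_b_c_t}.
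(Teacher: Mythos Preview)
Your proposal is correct and follows essentially the same route as the paper: use condition \eqref{c1} together with Theorem~\ref{limiting_weights} to force $a_{\ell k,\infty}=0$ between neighbors with distinct true states, deduce via Lemma~\ref{A_inf} that each $\bar{\mathcal{N}}_s$ contains agents sharing a single true state, then use \eqref{c2} to conclude $\bar{\Theta}^{\star}_s=\{\theta^{(k)}\}$ and invoke Theorem~\ref{public_b_c_t}. Your write-up is in fact a bit more explicit than the paper's (you spell out the path-chaining argument, the positivity of the Perron entries, and the identification $\bar{\Theta}^{\star}_s=\bigcap_{\ell\in\bar{\mathcal{N}}_s}\Theta^{\star}_\ell$), but the underlying logic is identical.
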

\begin{proof}
If \eqref{c1} holds for two neighbors $k,\ell$ with different states, then we have from Theorem \ref{limiting_weights} that $a_{k\ell,\infty}=a_{\ell k,\infty}=0$, which means that two agents with different states do not exchange information directly. Since this holds for every two neighbors across the network, this implies that there can not be two agents with different true states in the same sub-network $\bar{\mathcal{N}}_s$. This further implies that, in every sub-network all agents have the same true hypothesis, because $\theta^{(k)}\in\Theta^{\star}_k$ for all $k\in\mathcal{N}$.

Then, if \eqref{c2} holds, we have that for every $\theta\neq\theta^{(k)}$ there is at least one agent $k\in\bar{\mathcal{N}}_s$ such that  $d_k(\theta)>0$. This implies that $C_s(\theta)<0$ for all $\theta\neq\theta^{(k)}$. Then, from \eqref{confidence} we have that $C_s(\theta^{(k)})=0$, which implies that $\theta^{\star}_s=\theta^{(k)}$. Thus, $\bar{\Theta}^{\star}_s=\{\theta^{(k)}\}$ and part $2$ of Theorem \ref{public_b_c_t} applies.
\end{proof}

Condition \eqref{c1} ensures that for any two neighboring agents $k,\ell$ with different states both agents can rule out the state of the other agent based on their own observations (i.e., $\theta^{(k)}\notin\Theta^{\star}_{\ell}$ and $\theta^{(\ell)}\notin\Theta^{\star}_k$). This condition ensures that in every formed sub-network all agents share the same true state. Then, condition \eqref{c2} is needed to ensure that in every sub-network the agents can collectively identify their true state.  We illustrate the results by means of some numerical examples.
\begin{Exa}{\textbf{(Consistent learning for all agents)}.}
We refer to Figure 1 for which from 
Theorem \ref{limiting_weights} we have
\begin{align}
    A_{\infty}=\begin{pmatrix}
&2/3 &1/3 & 0\\
&1/3 &2/3 &0\\
&0 &0 &1
\end{pmatrix}
\end{align}
where the order of agents' labeling is $\{1,2,3\}$. As we observe, $A_{\infty}$ is comprised of two strongly connected components. The limiting matrix in this case is
\begin{align}
    \bar{A}^{\mathsf{T}}_{\infty}=\begin{pmatrix}
&1/2 & 1/2 &0\\
&1/2 &1/2 &0\\
&0 &0 &1.
\end{pmatrix}
\end{align}
Moreover, because \eqref{c1}, \eqref{c2} hold for all $3$ agents, all of them converge to their true hypothesis, as we can see in the left plot of Fig. \ref{beliefs_correct}.\qedsymb
\begin{figure}[!h]
\centering\hspace{-5mm}
\includegraphics[width=0.5\textwidth]{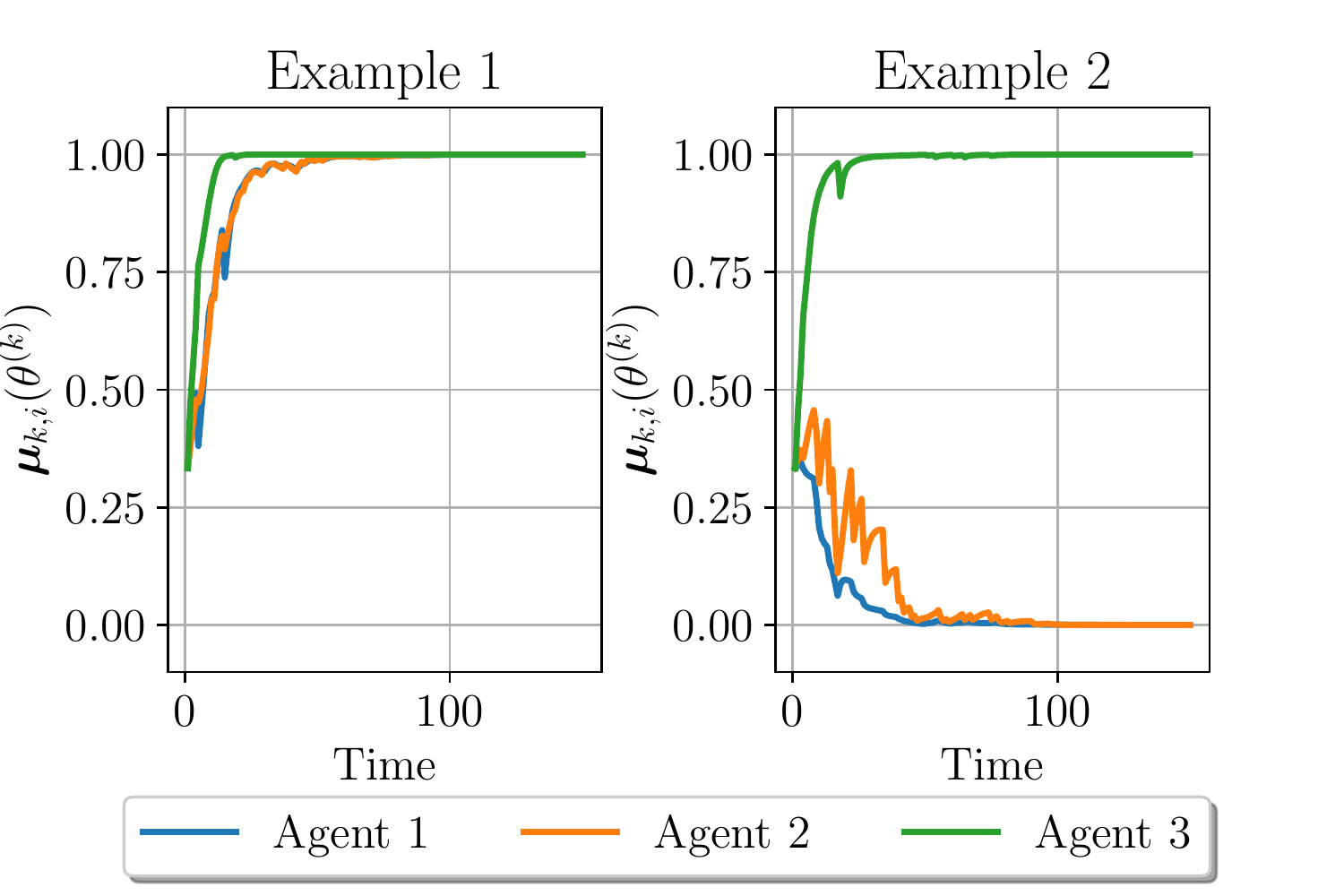}
\caption{Agents' evolution of the beliefs on each agent's true hypothesis (i.e, $\boldsymbol{\mu}_{k,i}(\theta^{(k)}),k\in\{1,2,3\}$) for: ({\em left} Example $1$) and ({\em right}) Example $2$.}
\label{beliefs_correct}
\end{figure}
\end{Exa}
\begin{Exa}{\textbf{(Inconsistent learning})}.
However, in case we have $\Theta^{\star}_1=\{\theta_1,\theta_2,\theta_3\}$ for the same example, then
\begin{align}
    A_{\infty}=\begin{pmatrix}
&4/6 &1/4 &1/4\\
&1/6 &3/4 &0\\
&1/6 &0 &3/4
\end{pmatrix}
\end{align}
and
\begin{align}
    \bar{A}^{\mathsf{T}}_{\infty}\approx\begin{pmatrix}
&0.428 & 0.286 &0.286\\
&0.428 &0.286 &0.286\\
&0.428 &0.286 &0.286
\end{pmatrix}
\end{align}
As we see above  $(A^{\mathsf{T}}_{\infty})^i$ converges to a rank-$1$ matrix. Thus, the network achieves consensus and all agents' beliefs converge to the same hypothesis ($\theta_3$ in this example). As a result, agents $1$ and $2$ fail to identify their true states as we observe in the right plot of Fig. \ref{beliefs_correct}. 
Also, note that in this example \eqref{c1},  \eqref{c2} are violated.\qedsymb
\end{Exa}
As we see from Example $2$, consistent learning is not always possible. When an agent cannot find its true hypothesis alone, then their beliefs will be determined by the information received from its neighborhood.

Another interesting remark is the following. In Example $2$ one agent (agent $1$) cannot rule out any of the three states. This makes information flow from agent $2$ to agent $3$ and their cooperative beliefs achieve consensus. Because of that, agent $2$'s cooperative belief converges to $\theta_3$, despite the fact that agent $2$ can find its true state ($\theta_2$) alone. This case has to be taken into account. Agents whose private beliefs indicate that a particular state $\theta$ is unlikely to be their true hypothesis (i.e., $\boldsymbol{\pi}_{k,i}(\theta)\to0$), then this information should be taken into account and rule out cooperative beliefs that suggest the opposite. One way to do that is by setting a low threshold $\epsilon>0$. If for a state $\theta$, $\boldsymbol{\pi}_{k,i}(\theta)<\epsilon$ and simultaneously $\boldsymbol{\mu}_{k,i}(\theta)>\epsilon$, then the cooperative belief $\boldsymbol{\mu}_{k,i}$ should be disregarded and the agent should limit itself to using its own local belief for inferring its true hypothesis. To do so, we assume that every agent keeps a {\em global} belief vector which is given by
\begin{align}
\label{global_belief}
    \bar{\boldsymbol{\mu}}_{k,i}=\begin{cases}\boldsymbol{\pi}_{k,i},&\text{if }\exists\theta:\boldsymbol{\pi}_{k,i}(\theta)<\epsilon\text{ and }\boldsymbol{\mu}_{k,i}(\theta)>\epsilon,\\
    \boldsymbol{\mu}_{k,i},&\text{otherwise}.
        \end{cases}
\end{align}
Finally, the proposed social learning algorithm is summarized as follows.
\noindent\rule{\linewidth}{0.5mm} \\[-0.5mm]
\textbf{Algorithm. Self Aware Social learning (SASL)}.\\[-2mm]
\rule{\linewidth}{0.5mm}
Initialize $\mu_{k,0}(\theta)=\pi_{k,0}(\theta)$ for all $k\in\mathcal{N},\theta\in\Theta$.\\
For all $k\in\mathcal{N}$ and $i\geq1$:\vspace{-0.25cm}
\begin{enumerate}
\item Obtain $\boldsymbol{\zeta}_{k,i}$.
\item Update for all $\theta\in\Theta$\\ $\boldsymbol{\psi}_{k,i}(\theta)=\frac{L_k(\boldsymbol{\zeta}_{k,i}|\theta)\boldsymbol{\mu}_{k,i-1}(\theta)}{\sum_{\theta'}L_k(\boldsymbol{\zeta}_{k,i}|\theta')\boldsymbol{\mu}_{k,i-1}(\theta')},\quad k\in\mathcal{N}$.
\item Update for all $\theta\in\Theta$\\
$\boldsymbol{\pi}_{k,i}(\theta)=\frac{L_k(\boldsymbol{\zeta}_{k,i}\vert\theta)\boldsymbol{\pi}_{k,i-1}(\theta)}{\sum_{\theta'\in\Theta}L_k(\boldsymbol{\zeta}_{k,i}\vert\theta')\boldsymbol{\pi}_{k,i-1}(\theta')}$.
\item Exchange $\boldsymbol{\pi}_{k,i}$ with every $\ell\in\mathcal{N}^{\star}_k$.
\item Update 
$\boldsymbol{a}_{\ell k,i}$ via \eqref{weights_definition}.
\item Update for all $\theta\in\Theta$\\ $\boldsymbol{\mu}_{k,i}(\theta)=\frac{\prod_{\ell\in\mathcal{N}_k}\boldsymbol{\psi}^{\boldsymbol{a}_{\ell k,i}}_{\ell,i}(\theta)}{\sum_{\theta'}\prod_{\ell\in\mathcal{N}_k}\boldsymbol{\psi}^{\boldsymbol{a}_{\ell k,i}}_{\ell,i}(\theta')}, \quad k\in\mathcal{N}$.
\vspace{0.1cm}
\item Compute global belief $\bar{\boldsymbol{\mu}}_{k,i}$ via \eqref{global_belief}.
\end{enumerate}
\vspace{-0.35 cm}\rule{\linewidth}{0.5mm}\\[-0.5mm]
\section{Experiments}
In the following experiments we illustrate the agents' belief evolution for a network of $10$ agents. The network is depicted in Fig. \ref{network}.
\begin{figure}[!h]
\centering\hspace{-5mm}
\includegraphics[width=0.5\textwidth]{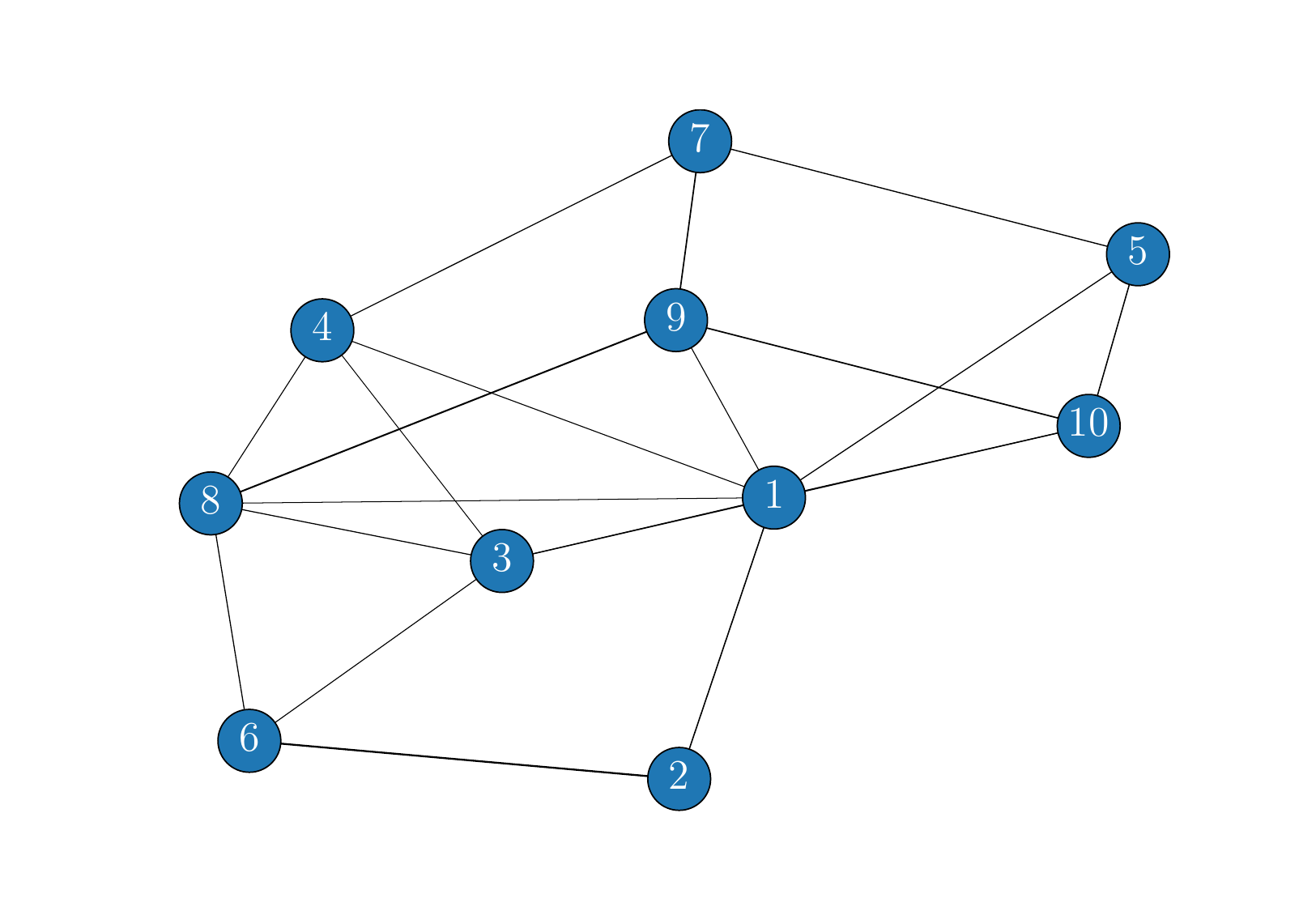}
\caption{Network topology.}
\label{network}
\end{figure}
To facilitate the illustration of our results in a simple way we assume that $|\mathcal{Z}_k|=10$ for all $k\in\mathcal{N}$ and the set of possible hypotheses is $\Theta=\{\theta_1,\ldots,\theta_{10}\}$. In order highlight the need for an adaptive combination mechanism in updating the cooperative beliefs, we compare the asymptotic beliefs of our proposed scheme to the classic cooperative social learning solution \eqref{adapt} - \eqref{combine} with a static (time-invariant) combination matrix $A$ (every agent assigns uniform combination weights to its neighbors) as well as to the non-cooperative learning (beliefs are given by \eqref{HMM_1}).

First, we explore a simple setup where every agent has a distinct true state, meaning $\theta^{(k)}=\theta_k$, for all $k\in\mathcal{N}$. Furthermore, none of the agents faces an identification problem and agents' likelihood functions are given by the following expression:
\begin{align}
\begin{small}
\label{likelihood_functions}
    L_k(\zeta^y_{k,i}\vert\theta_x)=\begin{cases}q_k\in(0,1),&\text{if }y=x,\\
    \frac{1-q_k}{|\mathcal{Z}_k|-1},&\text{otherwise}.
        \end{cases}
\end{small}%
\end{align}
for all $k\in\mathcal{N}$ and for $x,y,k=1,\ldots,10$. $\zeta^y_{k,i}\in\mathcal{Z}_k$ denotes the $y^{th}$ observation of agent $k$. We set $q_k=0.28$ for all $k\in\mathcal{N}$. Note that for the likelihood functions given by \eqref{likelihood_functions} and for this value of $q_k$, we have $\Theta^{\star}_k=\{\theta^{(k)}\}$ for all $k\in\mathcal{N}$. In Fig. \ref{network_3} (third row) we observe that each agent $k$'s cooperative beliefs  converge to its true hypothesis $\theta^{(k)}$ for all $k\in\mathcal{N}$.  We use a light green to orange colormap to indicate the magnitude of agents' beliefs on their true state. Light green indicates beliefs that are close to $0$, while orange indicates beliefs close to $1$. Also, note that the conditions given by Corollary \ref{consistent_learning2} are satisfied. On the contrary, the cooperative social learning algorithm leads all agents, except for agent $1$, to inconsistent learning, as expected due to the fact that the network achieves consensus and agents beliefs converge to $\theta_1$, which is the hypothesis maximizing  \eqref{confidence} in this example. Also note that the the non-cooperative learning (second row) is consistent since every agent can identify its true hypothesis alone. Finally, note that for SASL in steady-state the network decomposes into isolated agents as all adaptive weights assigned from every agent to its neighbors go to $0$ and thus no information is exchanged across the network. This is depicted in third row in Fig. \ref{network_3} where there are no edges as we set the edge-width between any two connected agents to be proportional to the sum of the respective combination weights between the agents. The same rationale is followed in the other experiments as well.
\begin{figure}[!h]
\centering\hspace{-5mm}
\includegraphics[width=0.35\textwidth]{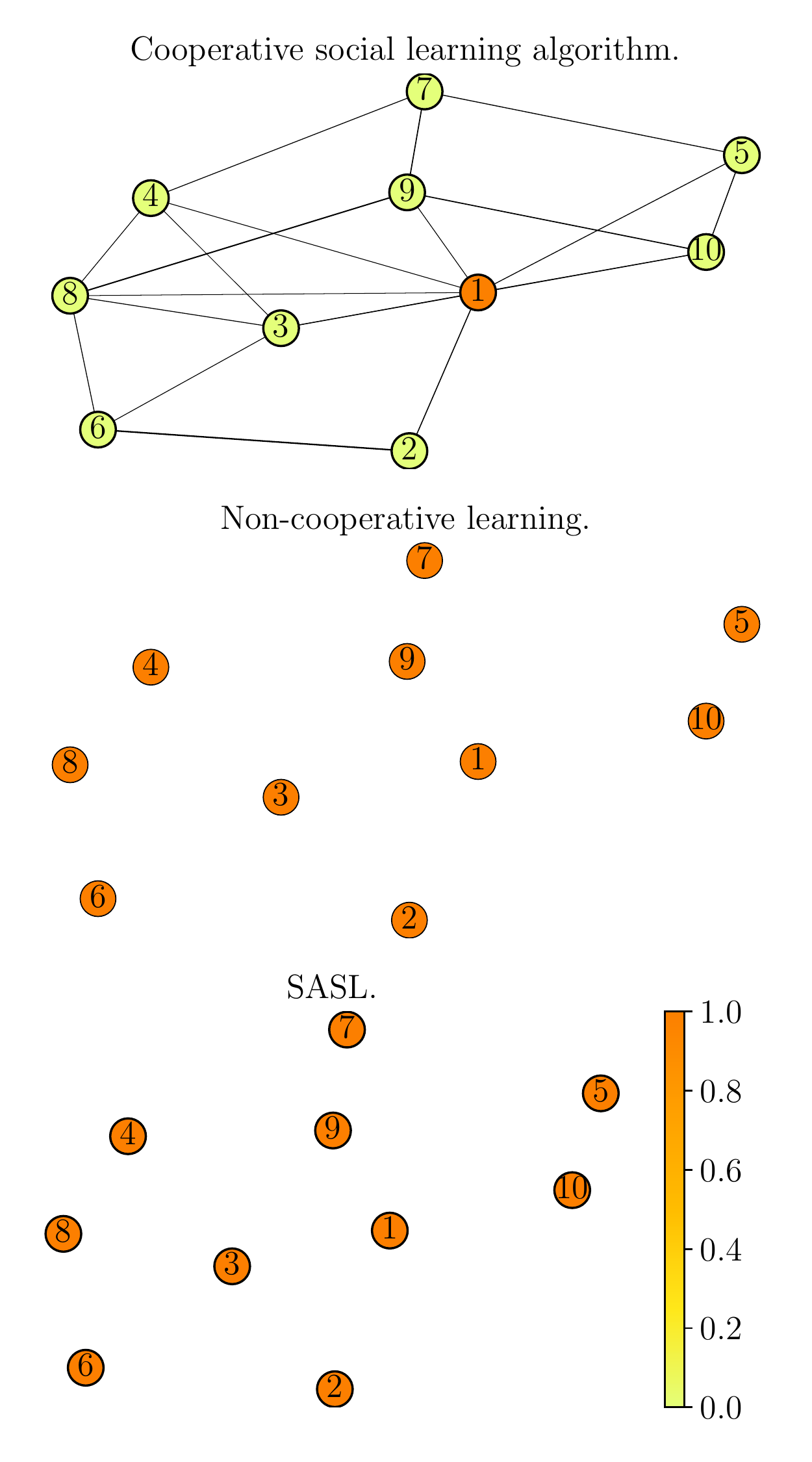}
\caption{
Steady-state beliefs of every agent on its true hypothesis $\theta^{(k)}$ for cooperative social learning (i.e., $\boldsymbol{\nu}_{k,i}(\theta^{(k)})$) (first row), non-cooperative learning (i.e., $\boldsymbol{\pi}_{k,i}(\theta^{(k)})$) (second row) and SASL Algorithm (i.e., $\boldsymbol{\mu}_{k,i}(\theta^{(k)})$) (third row). {\em Colormap} explanation: Agents colored in orange indicate that their beliefs on their true state are close to $1$, while light green denotes that beliefs are close to $0$}.
\label{network_3}
\end{figure}

We consider next a more interesting scenario where some agents share the same true hypothesis and some agents face an identification problem. More specifically, the agents' true hypotheses are assigned as follows:
\begin{align}
\label{true_hypotheseseq}
    \theta^{(k)}=\begin{cases}\theta_1,&\text{if }k\in\{1,\ldots,5\}\\
    \theta_6,&\text{if } k\in\{6,\ldots,10\}.
        \end{cases}
\end{align}
The agents' likelihood functions are constructed as follows so that some agents cannot discriminate among some states. For agents $1$ and $6$, their likelihood functions are given by \eqref{likelihood_functions} (they can identify their true hypotheses alone). For agents $2,3,4,5$  $L_k(\zeta^y_{k,i}\vert\theta_{x})$ is given by \eqref{likelihood_functions} for $x\geq 6$ and by
\begin{align}
\label{likelihood_functions2}
    L_k(\zeta^y_{k,i}\vert\theta_{x})=
    \frac{1}{|\mathcal{Z}_k|},\quad\forall y=1,\ldots,|\mathcal{Z}_k|
\end{align}
for $x\leq 5$. For agents $7,8,9,10$  $L_k(\zeta^y_{k,i}\vert\theta_{x})$ is given by \eqref{likelihood_functions} for $x\leq 5$ and by \eqref{likelihood_functions2} for $x\geq 6$. 

In this case we see that \eqref{c1} is satisfied and the network decomposes into two strongly connected components, one consisting of agents $1,2,3,4,5$ and one consisting of agents $6,7,8,9,10$ ({\em see} third row in Fig. \ref{beliefs_commonc}). We also can verify from \eqref{likelihood_functions2} that condition \eqref{c2} holds. As we see in third row in Fig. \ref{beliefs_commonc} (third row), all agents converge to their true hypotheses, as expected by Corollary \ref{consistent_learning2}, while both cooperative and non-cooperative solutions lead to inconsistent learning for some of the agents.
\begin{figure}[!h]
\centering\hspace{-5mm}
\includegraphics[width=0.35\textwidth]{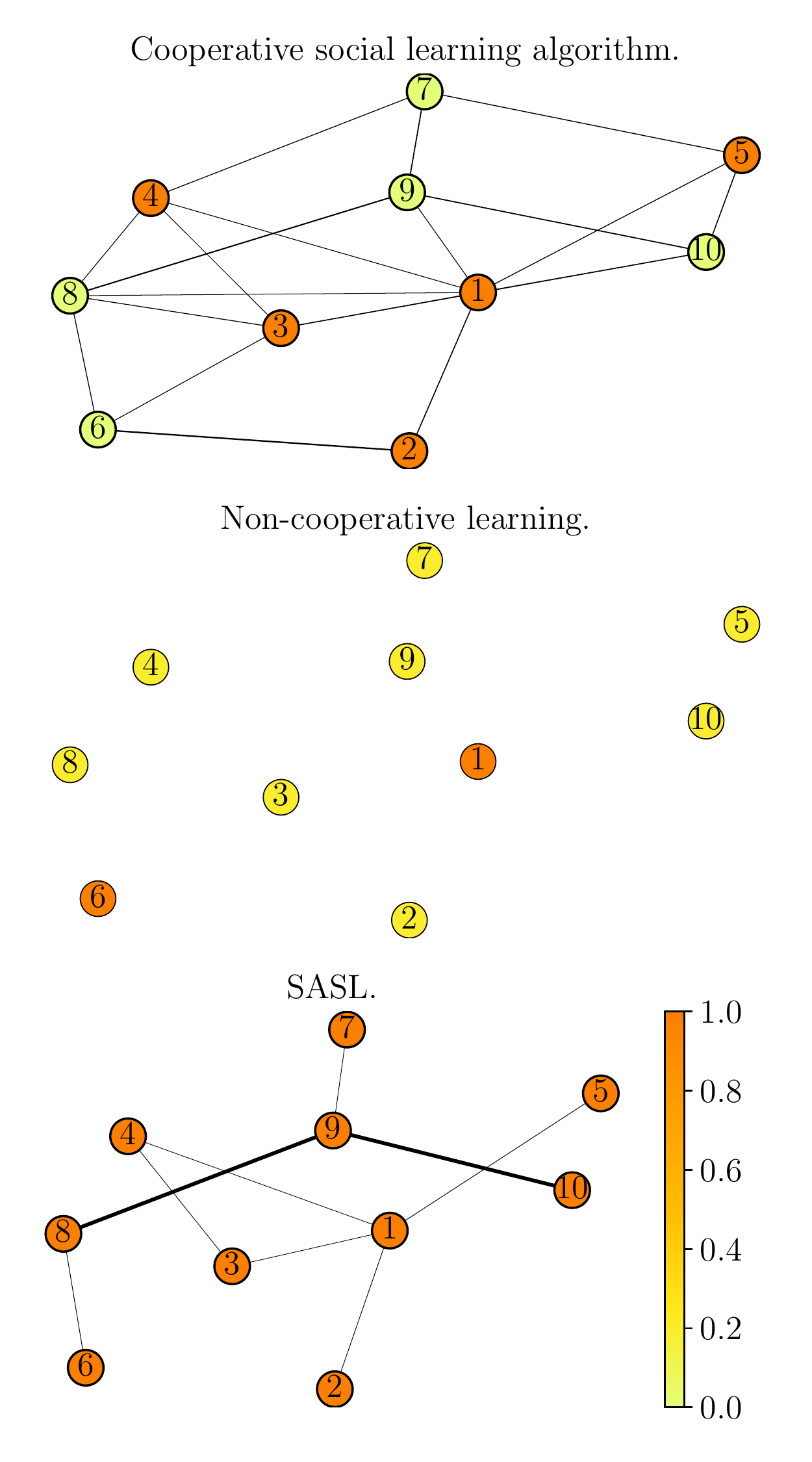}
\caption{Steady-state beliefs of every agent on its true hypothesis $\theta^{(k)}$ for cooperative social learning (i.e., $\boldsymbol{\nu}_{k,i}(\theta^{(k)})$) (first row), non-cooperative learning (i.e., $\boldsymbol{\pi}_{k,i}(\theta^{(k)})$) (second row) and SASL Algorithm (i.e., $\bar{\boldsymbol{\mu}}_{k,i}(\theta^{(k)})$) (third row) for all agents.}
\label{beliefs_commonc}
\end{figure}

Finally, we examine a scenario where the proposed solution fails to achieve consistent learning for all agents. In this setup, some agents again face an identification problem, but conditions \eqref{c1} and \eqref{c2} will not be satisfied. In this setup the true hypotheses are given again by \eqref{true_hypotheseseq}, the likelihood functions of agents $1$ and $6$ are given by \eqref{likelihood_functions2}, but now the likelihood functions for the remaining agents are given by
\begin{align}
    \label{likelihood_functions3}
    L_k(\zeta^y_{k,i}\vert\theta_{x})=\frac{1}{|\Theta|},\quad\text{for all }y, x,\text{ and for }k\neq 1,6.
\end{align}
We see that in this case all agents except for $1,6$ cannot distinguish between any two hypotheses, meaning $\Theta^{\star}_k=\Theta$ for all $k\neq1,6$. Moreover, we can verify
that condition \eqref{c1} does not hold and as a result the network does not decompose into two disconnected components ({\em see} third row in Fig. \ref{beliefs_commonnc}). 
As a result, the network achieves consensus and all agents' beliefs converge to hypothesis $\theta_1$ except for agent $6$ that can identify its true hypothesis alone. Of course, the cooperative and non-cooperative solutions lead to inconsistent learning for some agents as well, with the results being strictly worse compared to SASL (agent $6$ does not converge to $\theta_6$ with the cooperative solution, while under the non-cooperative solution agents $2,3,4,5$ do not converge to their true hypothesis $\theta_1$).
\begin{figure}[!h]
\centering\hspace{-5mm}
\includegraphics[width=0.35\textwidth]{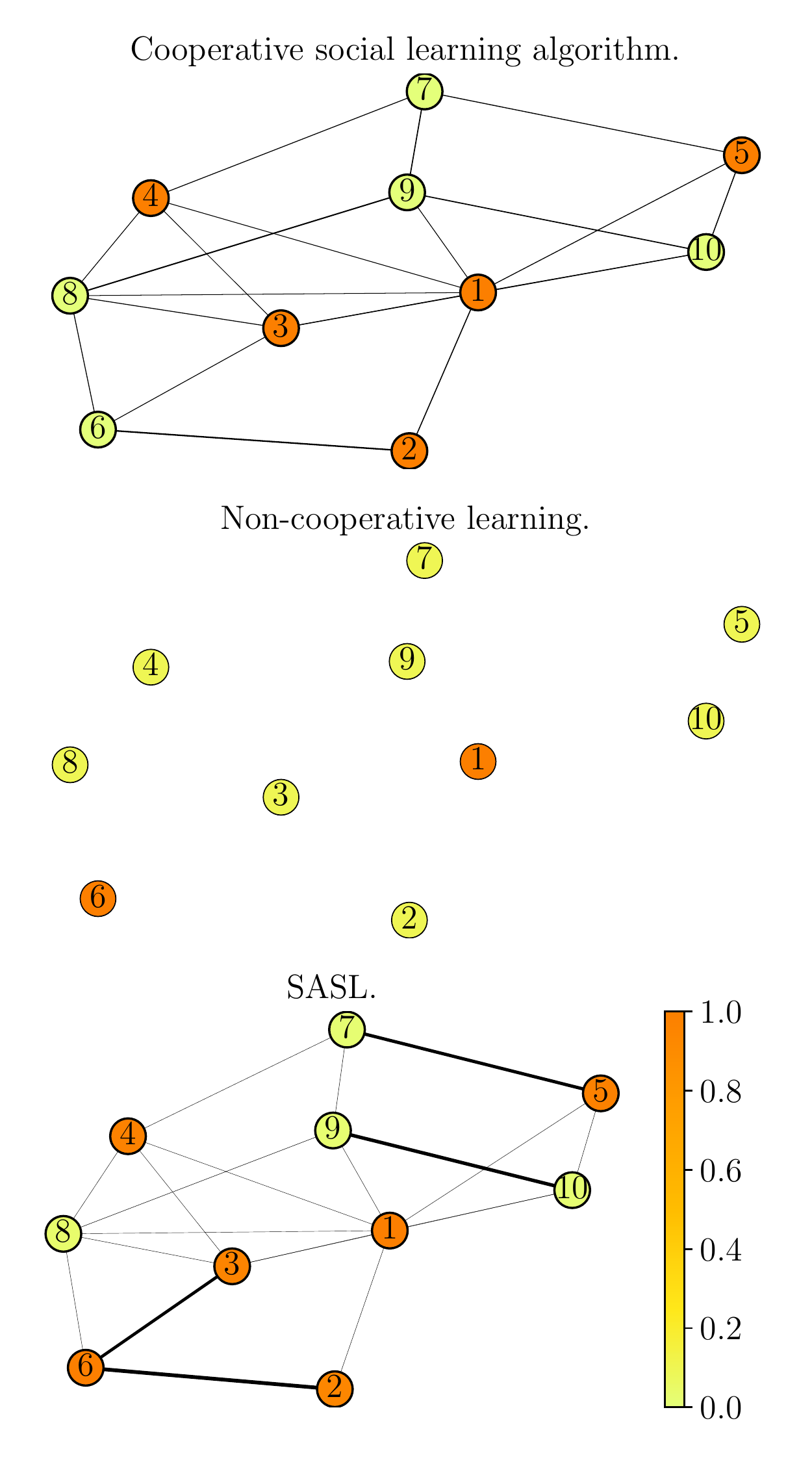}
\caption{Steady-state beliefs of every agent on its true hypothesis $\theta^{(k)}$  for cooperative social learning (i.e., $\boldsymbol{\nu}_{k,i}(\theta^{(k)})$) (first row), non-cooperative learning (i.e., $\boldsymbol{\pi}_{k,i}(\theta^{(k)})$) (second row) and SASL Algorithm (i.e., $\bar{\boldsymbol{\mu}}_{k,i}(\theta^{(k)})$) (third row).}
\label{beliefs_commonnc}
\end{figure}
\section{Conclusions}
In this work the problem of social learning with multiple true hypotheses and self-interested agents was investigated. Contrary to previous works that aim at showing that the network achieves consensus, here we investigated the scenario where every agent wants to converge to its true hypothesis. For this reason, we devised an adaptive combination weights scheme based on agents' private information and studied the performance of the proposed social learning algorithm. We provided conditions under which every agent in the network successfully learns its true hypothesis and we illustrated the learning behavior of the agents via computer simulations.
\appendices
\section{Proof of Lemma  \ref{lem1}}
\label{a1}
Let us define for convenience the following:
\begin{align}
&p_{k,i\vert\theta}\triangleq\mathbb{P}(\boldsymbol{\zeta}_{k,1:i}\vert\boldsymbol{\theta}^{(k)}=\theta),\quad\theta\in\Theta\\
&p_{\ell,i\vert\theta}\triangleq\mathbb{P}(\boldsymbol{\zeta}_{\ell,1:i}\vert\boldsymbol{\theta}^{(\ell)}=\theta),\quad\theta\in\Theta\\
&q_{\bar{\theta}^{(k)}}\triangleq\mathbb{P}(\boldsymbol{\theta}^{(k)}=\bar{\theta}^{(k)}),\quad\bar{\theta}^{(k)}\in\Theta\\
&q_{\bar{\theta}^{(\ell)}}\triangleq\mathbb{P}(\boldsymbol{\theta}^{(\ell)}=\bar{\theta}^{(\ell)}),\quad\bar{\theta}^{(\ell)}\in\Theta.
\end{align}
For every two agents $k\neq\ell$ let us consider the joint conditional probability 
\begin{small}
\begin{align}
\label{products_beliefs}
    &\mathbb{P}(\mathcal{S}_{k\ell}\vert\boldsymbol{\zeta}_{k,1:i},\boldsymbol{\zeta}_{\ell,1:i})\nonumber\\
    &\overset{\eqref{sum_theta}}=\sum_{\theta\in\Theta}\mathbb{P}(\boldsymbol{\theta}^{(k)}=\theta,\boldsymbol{\theta}^{(\ell)}=\theta\vert\boldsymbol{\zeta}_{k,1:i},\boldsymbol{\zeta}_{\ell,1:i})\nonumber\\
    &\overset{(a)}=\sum_{\theta\in\Theta}\frac{p_{k,i\vert\theta}p_{\ell,i\vert\theta}\mathbb{P}(\boldsymbol{\theta}^{(k)}=\theta,\boldsymbol{\theta}^{(\ell)}=\theta)}{\mathbb{P}(\boldsymbol{\zeta}_{k,1:i},\boldsymbol{\zeta}_{\ell,1
    :i})}\nonumber\\
    &\overset{(b)}=\sum_{\theta\in\Theta}\frac{p_{k,i\vert\theta}p_{\ell,i\vert\theta}\mathbb{P}(\boldsymbol{\theta}^{(k)}=\theta)\mathbb{P}(\boldsymbol{\theta}^{(\ell)}=\theta)}{\sum_{\bar{\theta}^{(k)},\bar{\theta}^{(\ell)}}p_{k,i\vert\bar{\theta}^{(k)}}p_{\ell,i\vert\bar{\theta}^{(\ell)}}q_{\bar{\theta}^{(k)}}q_{\bar{\theta}^{(\ell)}}}\nonumber\\
    &=\sum_{\theta\in\Theta}\frac{p_{k,i\vert\theta}p_{\ell,i\vert\theta}\mathbb{P}(\boldsymbol{\theta}^{(k)}=\theta)\mathbb{P}(\boldsymbol{\theta}^{(\ell)}=\theta)}{\sum_{\bar{\theta}^{(k)}}p_{k,i\vert\bar{\theta}^{(k)}}q_{\bar{\theta}^{(k)}}\sum_{\bar{\theta}^{(\ell)}}p_{\ell,i\vert\bar{\theta}^{(\ell)}}q_{\bar{\theta}^{(\ell)}}}\nonumber\\
    &=\sum_{\theta\in\Theta}\boldsymbol{\pi}_{k,i}(\theta)\boldsymbol{\pi}_{\ell,i}(\theta)
\end{align}
\end{small}%
In step $(a)$ Bayes rule was utilised along with the fact that $\boldsymbol{\zeta}_{k,1:i}$ and $\boldsymbol{\zeta}_{\ell,1:i}$ are conditionally independent given $\boldsymbol{\theta}^{(k)}$ and $\boldsymbol{\theta}^{(\ell)}$, respectively. Step $(b)$ is true due to the assumption that the hypotheses are independent across agents, i.e., $
    \mathbb{P}(\boldsymbol{\theta}^{(k)},\boldsymbol{\theta}^{(\ell)})=
    \mathbb{P}(\boldsymbol{\theta}^{(k)})\mathbb{P}(\boldsymbol{\theta}^{(\ell)})$.
\qedsymb
\section{Proof of Proposition \ref{rate_localbeliefs}}
\label{a2}
For convenience for a given hypothesis $\theta\in\Theta$ we define the {\em state-specific weights} as
\begin{align}
\label{combination_weights}
    &\boldsymbol{a}_{\ell k,i}(\theta)\triangleq\frac{\mathbb{P}(\mathcal{S}_{k\ell}^{\theta}\vert\boldsymbol{\zeta}_{k,1:i},\boldsymbol{\zeta}_{\ell,1:i})}{\boldsymbol{\sigma}_{k,i}}
    =\frac{\boldsymbol{\pi}_{k,i}(\theta)\boldsymbol{\pi}_{\ell,i}(\theta)}{\boldsymbol{\sigma}_{k,i}},
\end{align}
where $\ell\in\mathcal{N}^{\star}_k$. Then, the following holds:
\begin{align}
    \boldsymbol{a}_{\ell k,i}=\sum_{\theta\in\Theta}\boldsymbol{a}_{\ell k,i}(\theta)
\end{align}
We prove the result by following the techniques used in Theorem $2$ in \cite{nedic2017fast}. We define the log ratio of local beliefs:
\begin{align}
    &\bar{\boldsymbol{\lambda}}_{k,i}(\theta)\triangleq\log\frac{\boldsymbol{\pi}_{k,i}(\theta)}{\boldsymbol{\pi}_{k,i}(\theta^{(k)})}, \quad\theta\notin\Theta^{\star}_k,k\in\mathcal{N}.
\end{align}
Using \eqref{HMM_1} we have
\begin{align}
\label{llrb_tel}
    \bar{\boldsymbol{\lambda}}_{k,i}(\theta)=\sum^i_{t=1}\boldsymbol{\mathcal{L}}_{k,t}(\theta)+\lambda_{k,0}(\theta),\quad\theta\notin\Theta^{\star}_k
\end{align}
where the last term on the right-hand side (RHS) in \eqref{llrb_tel} is equal to $0$ due to Assumption \ref{prior_beliefs} and
\begin{align}
\label{llr1}
&\boldsymbol{\mathcal{L}}_{k,i}(\theta)\triangleq\log\frac{L_k(\boldsymbol{\zeta}_{k,i}|\theta)}{L_k(\boldsymbol{\zeta}_{k,i}|\theta^{(k)})},\quad\theta\notin\Theta^{\star}_k,k\in\mathcal{N}.
\end{align}
Taking expectations in \eqref{llrb_tel} we have for every $\theta\notin\Theta^{\star}_k$
\begin{align}
\label{exp_bound_log_beliefs}
    &\mathbb{E}\{\bar{\boldsymbol{\lambda}}_{k,i}(\theta)\}=\mathbb{E}\left\{\sum^i_{t=1}\boldsymbol{\mathcal{L}}_{k,t}(\theta)\right\}\nonumber\\
    &=-id_k(\theta)\leq-i\min_{\theta\neq\theta^{(k)}}d_k(\theta).
\end{align}
Next, by Assumption \ref{non_emptysupport}, we have
\begin{align}
\label{KL_bound}
    \log \alpha\leq d_k(\theta)\leq\log \frac{1}{\alpha},\quad\forall \theta\notin\Theta^{\star}_k.
\end{align}
Let us consider the sequence of random variables $\boldsymbol{\zeta}_{k,1:i}=(\boldsymbol{\zeta}_{k,1},\ldots,\boldsymbol{\zeta}_{k,i})$. We want to establish that $\bar{\boldsymbol{\lambda}}_{k,i}(\theta)$, which is a function of $\boldsymbol{\zeta}_{k,1:i}$, has bounded differences. For all $t$ such that $1\leq t\leq i$ we have
\begin{align}
\label{McD_bound1}
    &\max_{\zeta_{k,t}}\bar{\boldsymbol{\lambda}}_{k,i}(\theta)-\min_{\zeta_{k,t}}\bar{\boldsymbol{\lambda}}_{k,i}(\theta)=\max_{\zeta_{k,t}}\log\frac{L_k(\zeta_{k,t}|\theta)}{L_k(\zeta_{k,t}|\theta^{(k)})}\nonumber\\
    &-\min_{\zeta_{k,t}}\log\frac{L_k(\zeta_{k,t}|\theta)}{L_k(\zeta_{k,t}|\theta^{(k)})}\leq\log\frac{1}{\alpha}-\log\alpha=2\log\frac{1}{\alpha}
\end{align}
where we utilized \eqref{KL_bound}. Thus, $\bar{\boldsymbol{\lambda}}_{k,i}(\theta)$ has bounded differences and as a result, we can apply McDiarmid's inequality \cite{doob1940regularity}, which states the following. Consider a sequence of random variables $\boldsymbol{\zeta}_{k,1:i}=(\boldsymbol{\zeta}_{k,1},\ldots,\boldsymbol{\zeta}_{k,i})$ and a function $g:\mathcal{Z}^i_k\to\mathbb{R}$ of bounded differences for all $1\leq t\leq i$, meaning
\begin{align}
    \sup_{\zeta_{k,t}\in\mathcal{Z}_k}g(\ldots,\zeta_{k,t},\ldots)-\inf_{\zeta_{k,t}\in\mathcal{Z}_k}g(\ldots,\zeta_{k,t},\ldots)\leq \rho_t
\end{align}
for some $\rho_t<\infty$. Then, for any $\epsilon>0$ and all $i\geq1$
\begin{align}
\label{McD}
    \mathbb{P}\Big(g(\boldsymbol{\zeta}_{k,1:i})-\mathbb{E}\{g(\boldsymbol{\zeta}_{k,1:i})\}\geq\epsilon\Big)\leq\exp\Big(-\frac{2\epsilon^2}{\sum^i_{t=1}\rho^2_t}\Big).
\end{align}
Thus, from \eqref{McD} for $\rho_t=2\log\frac{1}{\alpha}$ (the bound from \eqref{McD_bound1}) 
we have
\begin{align}
\label{bound_llrb}
    \mathbb{P}\Big(\bar{\boldsymbol{\lambda}}_{k,i}(\theta)-\mathbb{E}\{\bar{\boldsymbol{\lambda}}_{k,i}(\theta)\}\geq\epsilon\Big)\leq\exp\Big(-\frac{2\epsilon^2}{4i(\log\frac{1}{\alpha})^2}\Big).
\end{align}
Then, since $\boldsymbol{\pi}_{k,i}(\theta)\in(0,1)$ we have for all $\theta\notin\Theta^{\star}_k$
\begin{align}
\label{in1}
    \boldsymbol{\pi}_{k,i}(\theta)\leq\frac{\boldsymbol{\pi}_{k,i}(\theta)}{\boldsymbol{\pi}_{k,i}(\theta^{(k)})}=\exp\Big(\bar{\boldsymbol{\lambda}}_{k,i}(\theta)\Big).
\end{align}
Thus, for an arbitrary $\varepsilon$ we have
\begin{small}%
\begin{align}
\label{bound_belief}
    &\mathbb{P}\Big(\boldsymbol{\pi}_{k,i}(\theta)\geq\exp(\varepsilon)\Big)\leq\mathbb{P}\Big(\exp(\bar{\boldsymbol{\lambda}}_{k,i}(\theta))\geq\exp(\varepsilon)\Big)\nonumber\\
    &=\mathbb{P}\Big(\bar{\boldsymbol{\lambda}}_{k,i}(\theta)\geq\varepsilon\Big)\nonumber\\
    &\overset{\eqref{exp_bound_log_beliefs}}\leq\mathbb{P}\Big(\bar{\boldsymbol{\lambda}}_{k,i}(\theta)-\mathbb{E}\{\boldsymbol{\lambda}_{k,i}\}\geq\varepsilon+i\min_{\theta\notin\Theta^{\star}_k}d_k(\theta)\Big)
\end{align}
\end{small}%
By utilizing \eqref{bound_llrb}, \eqref{bound_belief} and by setting
\begin{align}
\varepsilon=-\frac{i}{2}\min_{\theta\notin\Theta^{\star}_k}d_k(\theta)
\end{align}
we obtain the result.
\qedsymb
\section{Proof of Theorem \ref{limiting_weights}}\label{a4}
From Proposition \ref{rate_localbeliefs} we have for every agent $k\in\mathcal{N}$:
\begin{align}
    \boldsymbol{\pi}_{k,i}(\theta)\overset{\text{a.s.}}\longrightarrow0,\quad\forall\theta\notin\Theta^{\star}_k.
\end{align}
By utilizing \eqref{combination_weights}, the above implies that
\begin{align}
    \boldsymbol{a}_{\ell k,i}(\theta)\overset{\text{a.s.}}\longrightarrow0,\quad\forall\theta\notin\Theta^{\star}_k,\forall\ell\in\mathcal{N}^{\star}_k.
\end{align}
Moreover, due to Assumption \ref{prior_beliefs} and the fact that all $\theta\in\Theta$ are observationally equivalent we have
\begin{align}
\label{equal_beliefs}
    \boldsymbol{\pi}_{k,i}(\theta)=\boldsymbol{\pi}_{k,i}(\theta'),\quad\forall\theta,\theta'\in\Theta^{\star}_k,\theta\neq\theta',\forall i\geq1.
\end{align}
Utilizing the above, since $\boldsymbol{\pi}_{k,i}$ is a probability vector we have
\begin{align}
    \boldsymbol{\pi}_{k,i}(\theta)\overset{\text{a.s.}}\longrightarrow\frac{1}{|\Theta^{\star}_k|},\quad\forall\theta\in\Theta^{\star}_k.
\end{align}
Similarly for a neighbor $\ell\in\mathcal{N}^{\star}_k$ we have that
\begin{align}
    &\boldsymbol{\pi}_{\ell,i}(\theta)\overset{\text{a.s.}}\longrightarrow0,\quad&\forall\theta\notin\Theta^{\star}_{\ell}\\
    &\boldsymbol{a}_{n\ell,i}(\theta)\overset{\text{a.s.}}\longrightarrow0,\quad&\forall\theta\notin\Theta^{\star}_{\ell},\quad\forall n\in\mathcal{N}^{\star}_{\ell}\\
    &\boldsymbol{\pi}_{\ell,i}(\theta)\overset{\text{a.s.}}\longrightarrow\frac{1}{|\Theta^{\star}_{\ell}|},\quad&\forall\theta\in\Theta^{\star}_{\ell}.
\end{align}
Utilizing the above, for an agent $k$: 
\begin{align}
&\mathbb{P}(\boldsymbol{\theta}^{(k)}=\theta,\boldsymbol{\theta}^{(\ell)}=\theta\vert\boldsymbol{\zeta}_{k,1:i},\boldsymbol{\zeta}_{\ell,1:i})\nonumber\\
&\overset{\text{a.s.}}\longrightarrow
\begin{cases}\frac{1}{|\Theta^{\star}_k|}\frac{1}{|\Theta^{\star}_{\ell}|},&\theta\in\Theta^{\star}_k\cap\Theta^{\star}_{\ell},\ell\in\mathcal{N}^{\star}_k
\\        0,&\text{otherwise}.
        \end{cases}
\end{align}
Then, \eqref{normalizing_factor} yields
\begin{align}
    &\boldsymbol{\sigma}_{k,i}\overset{\text{a.s.}}\longrightarrow
1+\sum_{\ell\in\mathcal{N}^{\star}_k}\sum_{\theta\in\Theta^{\star}_k\cap\Theta^{\star}_{\ell}}\frac{1}{|\Theta^{\star}_k|}\frac{1}{|\Theta^{\star}_{\ell}|}\nonumber\\
&=
1+
\sum_{\ell\in\mathcal{N}^{\star}_k}\frac{|\Theta^{\star}_k\cap\Theta^{\star}_{\ell}|}{|\Theta^{\star}_k||\Theta^{\star}_{\ell}|}
\end{align}
Then, from \eqref{combination_weights} we obtain the result.\qedsymb
\section{Proof of Lemma \ref{A_inf}}
\label{a8}
A strongly-connected component is a set of connected agents where information can flow from every agent to every other agent in that set and at least one agent has a self-loop (i.e., there is at least one $k\in\mathcal{N}$ such that $a_{kk}>0$) \cite{Sayed_2014}.

Since $\mathcal{G}$ is undirected we have that if $\ell\in\mathcal{N}^{\star}_k\Longleftrightarrow k\in\mathcal{N}^{\star}_{\ell}$. Moreover, due to \eqref{limiting_weightseq}, we observe that
\begin{align}
    &a_{\ell k,\infty}>0\Longleftrightarrow a_{k\ell,\infty}>0,\quad\forall k,\ell\in\mathcal{N},k\neq\ell\\
    &a_{\ell k,\infty}=0\Longleftrightarrow a_{k\ell,\infty}=0,\quad\forall k,\ell\in\mathcal{N},k\neq\ell
\end{align}
since $a_{k\ell,\infty},a_{\ell k,\infty}\propto\frac{|\Theta^{\star}_k\cap\Theta^{\star}_{\ell}|}{|\Theta^{\star}_k||\Theta^{\star}_{\ell}|}$. 
This implies that information can flow from $k$ to $\ell$ if and only if information can flow from $\ell$ to $k$ as well. This means that there is not one-directional flow of information between any two agents in the network. This implies that there can not be any path in the network that allows information to flow from any agent $k$ to any other agent $\ell$ without information from $\ell$ to flow to $k$ as well. As a result, all network is decomposed into disjoint strongly-connected components. Then, 
let $\bar{\mathcal{N}}_1,\ldots,\bar{\mathcal{N}}_S$ denote the distinct strongly connected components. Then, \eqref{Per_eq} follows from Perron-Frobenius Theorem. 
\section{Proof of Theorem \ref{public_b_c_t}}\label{a7}
First we prove a useful Lemma that provides a bound on the difference between the log-belief ratios formed by algorithm \eqref{adapt_mod}-\eqref{combine_mod} and \eqref{adapt_centralized}-\eqref{combine_centralized}. Let us first define for every possible pair $\theta,\theta'\in\Theta$ such that $\theta\neq\theta'$  the aforementioned log-belief ratios as follows:
\begin{align}
\label{lbr_mod}
&\boldsymbol{\lambda}_{k,i}(\theta,\theta')\triangleq\log\frac{\boldsymbol{\mu}_{k,i}(\theta)}{\boldsymbol{\mu}_{k,i}(\theta')},\quad k\in\mathcal{N}\\
\label{lbr_centralized}
&\boldsymbol{\lambda}^c_{k,i}(\theta,\theta')\triangleq\log\frac{\boldsymbol{\mu}^c_{k,i}(\theta)}{\boldsymbol{\mu}^c_{k,i}(\theta')},\quad k\in\mathcal{N}.
\end{align}
and the log-likelihood ratio:
\begin{align}
&\boldsymbol{\mathcal{L}}_{k,i}(\theta,\theta')\triangleq\log\frac{L_k(\boldsymbol{\zeta}_{k,i}|\theta)}{L_k(\boldsymbol{\zeta}_{k,i}|\theta')},\quad k\in\mathcal{N}
\end{align}
Then, by utilizing \eqref{adapt_mod}-\eqref{combine_mod} and \eqref{adapt_centralized}-\eqref{combine_centralized}, respectively, \eqref{lbr_mod} and \eqref{lbr_centralized} yield:
\begin{align}
&\boldsymbol{\lambda}_{k,i}(\theta,\theta')
    =\sum_{\ell\in\mathcal{N}_k}\boldsymbol{a}_{\ell k,i}\boldsymbol{\mathcal{L}}_{\ell,i}(\theta,\theta')\nonumber\\
    &+
    \sum_{\ell\in\mathcal{N}_k}\boldsymbol{a}_{\ell k,i}\boldsymbol{\lambda}_{\ell,i-1}(\theta,\theta')\\
    &\boldsymbol{\lambda}^c_{k,i}(\theta,\theta')
    =\sum_{\ell\in\mathcal{N}_k}a_{\ell k,\infty}\boldsymbol{\mathcal{L}}_{\ell,i}(\theta,\theta')\nonumber\\
    &+
    \sum_{\ell\in\mathcal{N}_k}a_{\ell k,\infty}\boldsymbol{\lambda}^c_{\ell,i-1}(\theta,\theta').
\end{align}
The above are written in matrix-vector notation as
\begin{align}
\label{log_12mvb_mod}
        &\boldsymbol{\lambda}_i(\theta,\theta')=\boldsymbol{A}^{\mathsf{T}}_i\boldsymbol{\mathcal{L}}_i(\theta,\theta')+\boldsymbol{A}^{\mathsf{T}}_i\boldsymbol{\lambda}_{i-1}(\theta,\theta')\\
    \label{log_12mvb}
        &\boldsymbol{\lambda}^c_i(\theta,\theta')=A^{\mathsf{T}}_{\infty}\boldsymbol{\mathcal{L}}_i(\theta,\theta')+A^{\mathsf{T}}_{\infty}\boldsymbol{\lambda}^c_{i-1}(\theta,\theta')
\end{align}
where
{\begin{small}{
\begin{align}
&\boldsymbol{\mathcal{L}}_i(\theta,\theta')=\Bigg[\log\frac{L_{1,i}(\boldsymbol{\zeta}_{1,i}|\theta)}{L_{1,i}(\boldsymbol{\zeta}_{1,i}|\theta')},\ldots
,\log\frac{L_{|\mathcal{N}|,i}(\boldsymbol{\zeta}_{|\mathcal{N}|,i}|\theta)}{L_{|\mathcal{N}|,i}(\boldsymbol{\zeta}_{|\mathcal{N}|,i}|\theta')}\Bigg]^{\mathsf{T}}
\end{align}
}\end{small}}%
and $\boldsymbol{\lambda}_i(\theta,\theta')=[\boldsymbol{\lambda}_{1,i}(\theta,\theta'),\ldots,\boldsymbol{\lambda}_{|\mathcal{N}|,i}(\theta,\theta')]^{\mathsf{T}}$ and $\boldsymbol{\lambda}^c_i(\theta,\theta')=[\boldsymbol{\lambda}^c_{1,i}(\theta,\theta'),\ldots,\boldsymbol{\lambda}^c_{|\mathcal{N}|,i}(\theta,\theta')]^{\mathsf{T}}$.
\begin{Lem} \label{bounded_log_belief_ratios}{\bf (Bounded difference of log-belief ratios)}. 
The difference between the log-belief ratios is bounded for all $i\geq1$ and for all $\theta\neq\theta'$, $\theta,\theta'\in\Theta$, meaning
\begin{align}
    \lim_{i\to\infty}\sup\mathbb{E}\{||\boldsymbol{\lambda}_i(\theta,\theta')-\boldsymbol{\lambda}^c_i(\theta,\theta')||_{\infty}\}\leq y<\infty.
\end{align}
\end{Lem}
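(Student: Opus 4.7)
My plan is to derive a recursive bound on the error vector $\boldsymbol{\delta}_i \triangleq \boldsymbol{\lambda}_i(\theta,\theta') - \boldsymbol{\lambda}^c_i(\theta,\theta')$, unroll the recursion, and bound each term in expectation. Subtracting \eqref{log_12mvb} from \eqref{log_12mvb_mod} and adding and subtracting $A^{\mathsf{T}}_{\infty} \boldsymbol{\lambda}_{i-1}(\theta,\theta')$, I obtain the perturbed recursion
\begin{equation*}
\boldsymbol{\delta}_i = A^{\mathsf{T}}_{\infty}\,\boldsymbol{\delta}_{i-1} + \boldsymbol{E}_i\bigl(\boldsymbol{\mathcal{L}}_i(\theta,\theta') + \boldsymbol{\lambda}_{i-1}(\theta,\theta')\bigr), \qquad \boldsymbol{E}_i \triangleq \boldsymbol{A}^{\mathsf{T}}_i - A^{\mathsf{T}}_{\infty}.
\end{equation*}
Because uniform priors force $\boldsymbol{\delta}_0 = 0$, unrolling yields $\boldsymbol{\delta}_i = \sum_{t=1}^{i} (A^{\mathsf{T}}_{\infty})^{i-t} \boldsymbol{E}_t \bigl(\boldsymbol{\mathcal{L}}_t(\theta,\theta') + \boldsymbol{\lambda}_{t-1}(\theta,\theta')\bigr)$. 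Left-stochasticity of $A_{\infty}$ implies that $(A^{\mathsf{T}}_{\infty})^{i-t}$ is row-stochastic, so its operator $\infty$-norm equals $1$, and taking $\|\cdot\|_\infty$ gives
\begin{equation*}
\|\boldsymbol{\delta}_i\|_\infty \;\leq\; \sum_{t=1}^{i} \|\boldsymbol{E}_t\|_\infty \bigl(\|\boldsymbol{\mathcal{L}}_t(\theta,\theta')\|_\infty + \|\boldsymbol{\lambda}_{t-1}(\theta,\theta')\|_\infty\bigr).
\end{equation*}

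Next, I would bound each factor deterministically or in expectation. By Assumption~\ref{non_emptysupport}, $\|\boldsymbol{\mathcal{L}}_t(\theta,\theta')\|_\infty \leq -\log\alpha$. Applying $\|\cdot\|_\infty$ to \eqref{log_12mvb_mod} and iterating (using again that $\boldsymbol{A}^{\mathsf{T}}_t$ is row-stochastic) gives $\|\boldsymbol{\lambda}_{t-1}(\theta,\theta')\|_\infty \leq -(t-1)\log\alpha$, so the parenthesized factor is at most linear in $t$. Thus the whole problem reduces to showing that $\mathbb{E}\{\|\boldsymbol{E}_t\|_\infty\}$ decays fast enough that $\sum_{t \geq 1} t\, \mathbb{E}\{\|\boldsymbol{E}_t\|_\infty\} < \infty$; from that the claim would follow with $y = -\log\alpha \cdot \sum_{t \geq 1} t\, \mathbb{E}\{\|\boldsymbol{E}_t\|_\infty\}$, which is uniform in $i$.

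This final step is the main obstacle, and the way to overcome it is to translate the high-probability belief bound of Proposition~\ref{rate_localbeliefs} into exponential decay of $\mathbb{E}\{\|\boldsymbol{E}_t\|_\infty\}$. Each entry of $\boldsymbol{A}_i$ is a rational function of the $\boldsymbol{\pi}_{k,i}(\theta)$'s with denominator $\boldsymbol{\sigma}_{k,i} \geq 1$, and $A_{\infty}$ corresponds to plugging in the limit values $\tfrac{1}{|\Theta^{\star}_k|}\mathbb{I}_{\theta\in\Theta^{\star}_k}$ identified in the proof of Theorem~\ref{limiting_weights}. Because $\boldsymbol{\pi}_{k,i}(\theta)=\boldsymbol{\pi}_{k,i}(\theta')$ for all $\theta,\theta' \in \Theta^{\star}_k$ (by symmetry of the uniform prior), the deviation $\|\boldsymbol{E}_t\|_\infty$ is controlled by the residual $\boldsymbol{r}_{k,t} \triangleq \sum_{\theta \notin \Theta^{\star}_k} \boldsymbol{\pi}_{k,t}(\theta)$ via a Lipschitz argument, i.e.\ $\|\boldsymbol{E}_t\|_\infty \leq C \max_k \boldsymbol{r}_{k,t}$ for some constant $C$ depending only on $N$ and $|\Theta|$.

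Finally, a union bound over $k$ and $\theta \notin \Theta^{\star}_k$ together with Proposition~\ref{rate_localbeliefs} gives $\mathbb{P}(\boldsymbol{r}_{k,t} \geq |\Theta|\exp(x_k t)) \leq |\Theta|\exp(-y_k t)$, and since $\boldsymbol{r}_{k,t} \leq 1$ always, combining the two regimes gives $\mathbb{E}\{\boldsymbol{r}_{k,t}\} \leq C' e^{-c' t}$ for positive constants $c', C'$ depending on $\{x_k, y_k\}$. Hence $\mathbb{E}\{\|\boldsymbol{E}_t\|_\infty\} \leq C'' e^{-c' t}$, the series $\sum_t t\, C'' e^{-c' t}$ converges, and the $\limsup$ bound $y < \infty$ follows. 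I expect the delicate points to be (i) verifying the Lipschitz estimate for $\boldsymbol{E}_t$ uniformly (easy since $\boldsymbol{\sigma}_{k,t} \geq 1$), and (ii) the translation from the high-probability bound of Proposition~\ref{rate_localbeliefs} into an $L^1$ bound, both of which are routine once the outline above is in place.
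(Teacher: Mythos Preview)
Your proof is correct and follows the same overall strategy as the paper: derive a recursion for $\boldsymbol{\delta}_i$, unroll it, bound the log-likelihood term by $L=|\log\alpha|$ and the running log-ratio by $Lt$, and then reduce the claim to $\sum_{t\geq1} t\,\mathbb{E}\|\boldsymbol{E}_t\|_\infty<\infty$. There are two minor technical differences. First, the paper adds and subtracts $\boldsymbol{A}^{\mathsf{T}}_i\boldsymbol{\lambda}^c_{i-1}$ rather than $A^{\mathsf{T}}_\infty\boldsymbol{\lambda}_{i-1}$, so its recursion carries $\boldsymbol{A}^{\mathsf{T}}_i$ (still row-stochastic) in front of $\boldsymbol{\delta}_{i-1}$ and it bounds $\|\boldsymbol{\lambda}^c_{t-1}\|_\infty$ instead of your $\|\boldsymbol{\lambda}_{t-1}\|_\infty$; both choices yield the same $L(t-1)$ bound. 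Second, and more substantively, the paper obtains the exponential decay of the weight error via a dedicated auxiliary result (Lemma~\ref{rate_weights}) that proves separate concentration inequalities for the state-specific components $\boldsymbol{a}_{\ell k,i}(\theta)$ using McDiarmid again together with an algebra-of-rates lemma (Lemma~\ref{in_form}); your Lipschitz shortcut (the weights are smooth rational functions of the $\boldsymbol{\pi}$'s with denominator $\boldsymbol{\sigma}_{k,i}\geq1$, hence Lipschitz on the simplex, so $\|\boldsymbol{E}_t\|_\infty\leq C\max_k\boldsymbol{r}_{k,t}$, and Proposition~\ref{rate_localbeliefs} controls $\boldsymbol{r}_{k,t}$) reaches the same conclusion more directly and bypasses both auxiliary lemmas. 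The endpoint in either case is a series of the form $\sum_t t\,e^{-ct}$, whose convergence the paper verifies by the ratio test.
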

\begin{proof}
By utilizing \eqref{log_12mvb_mod} and \eqref{log_12mvb}, the difference in the log-belief ratios between the two algorithms for any two distinct $\theta,\theta'\in\Theta$ is given by
\begin{small}
\begin{align}
    &\boldsymbol{\lambda}_i(\theta,\theta')-\boldsymbol{\lambda}^c_i(\theta,\theta')\nonumber\\
    &=(\boldsymbol{A}^{\mathsf{T}}_i-A^{\mathsf{T}}_{\infty})\boldsymbol{\mathcal{L}}_i(\theta,\theta')+\boldsymbol{A}^{\mathsf{T}}_i\boldsymbol{\lambda}_{i-1}(\theta,\theta')\nonumber\\
    &-A^{\mathsf{T}}_{\infty}\boldsymbol{\lambda}^c_{i-1}(\theta,\theta')+\boldsymbol{A}^{\mathsf{T}}_i\boldsymbol{\lambda}^c_{i-1}(\theta,\theta')-\boldsymbol{A}^{\mathsf{T}}_i\boldsymbol{\lambda}^c_{i-1}(\theta,\theta')\nonumber\\
    &=(\boldsymbol{A}^{\mathsf{T}}_i-A^{\mathsf{T}}_{\infty})\boldsymbol{\mathcal{L}}_i(\theta,\theta')+\boldsymbol{A}^{\mathsf{T}}_i(\boldsymbol{\lambda}_{i-1}(\theta,\theta')-\boldsymbol{\lambda}^c_{i-1}(\theta,\theta'))\nonumber\\
    &+(\boldsymbol{A}^{\mathsf{T}}_i-A^{\mathsf{T}}_{\infty})\boldsymbol{\lambda}^c_{i-1}(\theta,\theta')
\end{align}
\end{small}%
Taking the  $L_{\infty}$-norm we have
\begin{small}
\begin{align}
    &||\boldsymbol{\lambda}_i(\theta,\theta')-\boldsymbol{\lambda}^c_i(\theta,\theta')||_{\infty}\overset{(a)}\leq
    ||(\boldsymbol{A}^{\mathsf{T}}_i-A^{\mathsf{T}}_{\infty})||_{\infty}||\boldsymbol{\mathcal{L}}_i(\theta,\theta')||_{\infty}\nonumber\\
    &+||\boldsymbol{A}^{\mathsf{T}}_i||_{\infty}||(\boldsymbol{\lambda}_{i-1}(\theta,\theta')-\boldsymbol{\lambda}^c_{i-1}(\theta,\theta'))||_{\infty}\nonumber\\
    &+||(\boldsymbol{A}^{\mathsf{T}}_i-A^{\mathsf{T}}_{\infty})||_{\infty}||\boldsymbol{\lambda}^c_{i-1}(\theta,\theta')||_{\infty}\nonumber\\
    &\overset{(b)}=    ||(\boldsymbol{A}^{\mathsf{T}}_i-A^{\mathsf{T}}_{\infty})||_{\infty}||\boldsymbol{\mathcal{L}}_i(\theta,\theta')||_{\infty}\nonumber\\
    &+||\boldsymbol{\lambda}_{i-1}(\theta,\theta')-\boldsymbol{\lambda}^c_{i-1}(\theta,\theta')||_{\infty}\nonumber\\
    &+||(\boldsymbol{A}^{\mathsf{T}}_i-A^{\mathsf{T}}_{\infty})||_{\infty}||\boldsymbol{\lambda}^c_{i-1}(\theta,\theta')||_{\infty}
\end{align}
\end{small}%
where $||x||_{\infty}$ corresponds to the {\em maximum (absolute) row sum norm} (induced by $L_{\infty}$ vector norm) if $x$ is a matrix. $(a)$ is true by the sub-multiplicative property of vector induced norms ({\em see} \cite{horn2012matrix} Theorem 5.6.2 property b) and $(b)$ due to the fact that $A^{\mathsf{T}}$ is right-stochastic and as a result $||A^{\mathsf{T}}_i||_{\infty}=1$ for all $i$. Expanding the above we get
\begin{small}
\begin{align}
\label{rec_in2}
    &||\boldsymbol{\lambda}_i(\theta,\theta')-\boldsymbol{\lambda}^c_i(\theta,\theta')||_{\infty}\nonumber\\
    &
    \leq\sum_{t=1}^i||(\boldsymbol{A}^{\mathsf{T}}_t-A^{\mathsf{T}}_{\infty})||_{\infty}||\boldsymbol{\mathcal{L}}_t(\theta,\theta')||_{\infty}\nonumber\\
    &+||\lambda_0(\theta,\theta')-\lambda^c_0(\theta,\theta')||_{\infty}\nonumber\\
    &+\sum_{t=1}^i||(\boldsymbol{A}^{\mathsf{T}}_t-A^{\mathsf{T}}_{\infty})||_{\infty}||\boldsymbol{\lambda}^c_{t-1}(\theta,\theta')||_{\infty}\nonumber\\
    &\overset{(a)}=\sum_{t=1}^i||(\boldsymbol{A}^{\mathsf{T}}_t-A^{\mathsf{T}}_{\infty})||_{\infty}||\boldsymbol{\mathcal{L}}_t(\theta,\theta')||_{\infty}\nonumber\\
    &+\sum_{t=1}^i||(\boldsymbol{A}^{\mathsf{T}}_t-A^{\mathsf{T}}_{\infty})||_{\infty}||\boldsymbol{\lambda}^c_{t-1}(\theta,\theta')||_{\infty}\nonumber\\
    &\overset{(b)}\leq L\sum_{t=1}^i||(\boldsymbol{A}^{\mathsf{T}}_t-A^{\mathsf{T}}_{\infty})||_{\infty}\nonumber\\
    &+\sum_{t=1}^i||(\boldsymbol{A}^{\mathsf{T}}_t-A^{\mathsf{T}}_{\infty})||_{\infty}||\boldsymbol{\lambda}^c_{t-1}(\theta,\theta')||_{\infty}
\end{align}
\end{small}%
where $(a)$ is true due to Assumption \ref{prior_beliefs} and $(b)$ because of the fact that $|\boldsymbol{\mathcal{L}}_t(\theta,\theta')|$ is bounded by $L=|\log\alpha|$. Regarding the second term in \eqref{rec_in2}, by iterating \eqref{log_12mvb} and using the sub-multiplicative property, we have
\begin{align}
\label{rec_in3}
   &\sum_{t=1}^i||(\boldsymbol{A}^{\mathsf{T}}_t-A^{\mathsf{T}}_{\infty})||_{\infty}||\boldsymbol{\lambda}^c_{t-1}(\theta,\theta')||_{\infty}\nonumber\\
    &\leq\sum_{t=1}^i||(\boldsymbol{A}^{\mathsf{T}}_t-A^{\mathsf{T}}_{\infty})||_{\infty}\sum^{t-1}_{t'=1}||(A^{\mathsf{T}}_{\infty})^{t'}||_{\infty}||\boldsymbol{\mathcal{L}}_{t-t'}(\theta,\theta')||_{\infty}\nonumber\\
    &+||(A^{\mathsf{T}}_{\infty})^{t-1}||_{\infty}||\lambda_0(\theta,\theta')||_{\infty}\nonumber\\
    &\overset{(a)}=\sum_{t=1}^i||(\boldsymbol{A}^{\mathsf{T}}_t-A^{\mathsf{T}}_{\infty})||_{\infty}\sum^{t-1}_{t'=1}||\boldsymbol{\mathcal{L}}_{t'}(\theta,\theta')||_{\infty}\nonumber\\
    &\overset{(b)}\leq \sum_{t=1}^i||(\boldsymbol{A}^{\mathsf{T}}_t-A^{\mathsf{T}}_{\infty})||_{\infty}\sum^{t-1}_{t'=1}L\nonumber\\
    &=L\sum_{t=1}^i||(\boldsymbol{A}^{\mathsf{T}}_t-A^{\mathsf{T}}_{\infty})||_{\infty}(t-1)
\end{align}
where $(a)$ is true due to Assumption \ref{prior_beliefs} and the fact that $||(A^{\mathsf{T}}_{\infty})^{t'}||_{\infty}=1$ for all $t'$. $(b)$ holds for the same reason as in \eqref{rec_in2}. Combining  \eqref{rec_in2} and \eqref{rec_in3} we get
\begin{align}
\label{rec_in4}
    &||\boldsymbol{\lambda}_i(\theta,\theta')-\boldsymbol{\lambda}^c_i(\theta,\theta')||_{\infty}
    \leq
    L\sum_{t=1}^i||(\boldsymbol{A}^{\mathsf{T}}_t-A^{\mathsf{T}}_{\infty})||_{\infty}\nonumber\\
    &+\sum_{t=1}^i||(\boldsymbol{A}^{\mathsf{T}}_t-A^{\mathsf{T}}_{\infty})||_{\infty}\sum^i_{t'=t}||\boldsymbol{\mathcal{L}}_{t'}||_{\infty}
    \nonumber\\
    &\leq L\sum_{t=1}^i||(\boldsymbol{A}^{\mathsf{T}}_t-A^{\mathsf{T}}_{\infty})||_{\infty}\nonumber\\
    &+L\sum_{t=1}^i||(\boldsymbol{A}^{\mathsf{T}}_t-A^{\mathsf{T}}_{\infty})||_{\infty}(t-1)
\end{align}
Let us bound  $\mathbb{E}\{\sum^i_{t=1}|\boldsymbol{a}_{\ell k,t}-a_{\ell k,\infty}|\}$. We have for every $k\in\mathcal{N},\ell\in\mathcal{N}^{\star}_k$ and for all $\theta\notin\Theta^{\star}_k\cap\Theta^{\star}_{\ell}$
\begin{small}
\begin{align}
    &\mathbb{E}\left\{\sum^i_{t=1}|\boldsymbol{a}_{\ell k,t}(\theta)-0|\right\}=\mathbb{E}\Bigg\{\sum^i_{t=1}|\boldsymbol{a}_{\ell k,t}(\theta)\mathbb{I}_{\{\boldsymbol{a}_{\ell k,t}\geq\exp(c_{\ell k}t)\}}|\nonumber\\
    &+\sum^i_{t=1}|\boldsymbol{a}_{\ell k,t}(\theta)\mathbb{I}_{\{\boldsymbol{a}_{\ell k,t}(\theta)<\exp(c_{\ell k}t)\}}|\Bigg\}\nonumber\\
    &\overset{(a)}\leq
    \sum^i_{t=1}1\times\exp(-d_{\ell k}t)+\sum^i_{t=1}\exp(-c_{\ell k}t)\times1
    \end{align}
    \end{small}%
where in $(a)$ we utilized part $1)$ of Lemma \ref{rate_weights} to upper bound the value of $|\boldsymbol{a}_{\ell k,t}(\theta)-0|=\boldsymbol{a}_{\ell k,t}(\theta)$. Moreover, following the same rationale as above, from part $2)$ of Lemma \ref{rate_weights} we have for every $k\in\mathcal{N},\ell\in\mathcal{N}^{\star}_k$ and for all $\theta\in\Theta^{\star}_k\cap\Theta^{\star}_{\ell}$
    \begin{align}
    &\mathbb{E}\left\{\sum^i_{t=1}|\boldsymbol{a}_{\ell k,t}(\theta)-a_{\ell k,\infty}(\theta)|\right\}\nonumber\\
    &=\mathbb{E}\Bigg\{\sum^i_{t=1}|\boldsymbol{a}_{\ell k,t}(\theta)-a_{\ell k,\infty}|\mathbb{I}_{\{|\boldsymbol{a}_{\ell k,t}(\theta)|\geq\alpha_{\ell k}'\exp(-c_{\ell k}'t)\}}\nonumber\\
    &+\sum^i_{t=1}|\boldsymbol{a}_{\ell k,t}(\theta)-a_{\ell k,\infty}|\mathbb{I}_{\{|\boldsymbol{a}_{\ell k,t}(\theta)|<\alpha_{\ell k}'\exp(-c_{\ell k}'t)\}}\Bigg\}\nonumber\\
    &\leq\sum^i_{t=1}1\times b_{\ell k}'\exp(-d_{\ell k}'t)+\sum^i_{t=1}\alpha_{\ell k}'\exp(-c_{\ell k}'t)\times1.
\end{align}
Moreover, we have
\begin{small}
\begin{align}
    &\sum^i_{t=1}||(\boldsymbol{A}^{\mathsf{T}}_t-A^{\mathsf{T}}_{\infty})||_{\infty}\leq\sum^i_{t=1}\max_{k\in\mathcal{N}}\Bigg\{\sum_{\ell\neq k}|\boldsymbol{a}_{\ell k,t}-a_{\ell k,\infty}|\nonumber\\
    &+|1-\sum_{\ell\neq k}\boldsymbol{a}_{\ell k,t}-1+\sum_{\ell\neq k}a_{\ell k,\infty}|\Bigg\}\nonumber\\
    &\leq2\sum^i_{t=1}\max_{k\in\mathcal{N}}\left\{\sum_{\ell\neq k}|\boldsymbol{a}_{\ell k,t}-a_{\ell k,\infty}|\right\}
    \nonumber\\
    &\leq2\sum^i_{t=1}\max_{k\in\mathcal{N}}\Bigg\{\sum_{\ell\neq k}(\sum_{\theta\notin\Theta^{\star}_k\cap\Theta^{\star}_{\ell}}|\boldsymbol{a}_{\ell k,t}(\theta)-0|\nonumber\\
    &+\sum_{\theta\in\Theta^{\star}_k\cap\Theta^{\star}_{\ell}}|\boldsymbol{a}_{\ell k,t}(\theta)-a_{\ell k,\infty}(\theta)|)\Bigg\}\nonumber\\
    &\overset{(a)}\leq2\sum^i_{t=1}\sum_{k\in\mathcal{N}}\Bigg(\sum_{\ell\neq k}\Big(\sum_{\theta\notin\Theta^{\star}_k\cap\Theta^{\star}_{\ell}}|\boldsymbol{a}_{\ell k,t}(\theta)-0|\nonumber\\
    &+\sum_{\theta\in\Theta^{\star}_k\cap\Theta^{\star}_{\ell}}|\boldsymbol{a}_{\ell k,t}(\theta)-a_{\ell k,\infty}(\theta)|\Big)\Bigg).
\end{align}
\end{small}%
where $(a)$ is true due to the fact that all the summation terms are positive and thus the total sum of the elements is greater than the maximum element. By taking expectation, we have
\begin{small}
\begin{align}
\label{sfa}
    &\mathbb{E}\left\{\sum^i_{t=1}||(\boldsymbol{A}^{\mathsf{T}}_t-A^{\mathsf{T}}_{\infty})||_{\infty}\right\}\nonumber\\
    &\leq
    2\sum^i_{t=1}\sum_{k\in\mathcal{N}}\Bigg(\sum_{\ell\neq k}\Big(\sum_{\theta\notin\Theta^{\star}_k\cap\Theta^{\star}_{\ell}}\exp(-d_{\ell k}t)+\exp(c_{\ell k}t)\nonumber\\
    &+\sum_{\theta\in\Theta^{\star}_k\cap\Theta^{\star}_{\ell}}b_{\ell k,t}'\exp(-d_{\ell k}'t)+\alpha_{\ell k,t}'\exp(-c_{\ell k}'t)\Big)\Bigg)=\xi_i
\end{align}
\end{small}%
We have
\begin{align}
    \xi_{\infty}\triangleq\lim_{i\to\infty}\xi_i<\infty.
\end{align}
because all series appearing in \eqref{sfa} are convergent geometric series. Then, \eqref{rec_in4} yields
\begin{small}
\begin{align}
\label{rec_in5}
    &\mathbb{E}\{||\boldsymbol{\lambda}_i(\theta,\theta')-\boldsymbol{\lambda}^c_i(\theta,\theta')||_{\infty}\}\nonumber\\
    &\leq\hspace{-1mm}\mathbb{E}\hspace{-1mm}\left\{\hspace{-1mm}L\sum_{t=1}^i||(\boldsymbol{A}^{\mathsf{T}}_t-A^{\mathsf{T}}_{\infty})||_{\infty}\hspace{-1mm}+\hspace{-1mm}L\sum_{t=1}^i||(\boldsymbol{A}^{\mathsf{T}}_t-A^{\mathsf{T}}_{\infty})||_{\infty}(t-1)\hspace{-1mm}\right\}\nonumber\\
    &\leq L\xi_i+2L\Bigg(\sum^i_{t=1}\sum_{k\in\mathcal{N}}\Big(\sum_{\ell\neq k}\big(\sum_{\theta\notin\Theta^{\star}_k\cap\Theta^{\star}_{\ell}}\exp(-d_{\ell k}t)(t-1)\nonumber\\
    &+\exp(-c_{\ell k}t)(t-1)+\sum_{\theta\in\Theta^{\star}_k\cap\Theta^{\star}_{\ell}}\alpha_{\ell k}'\exp(-c_{\ell k}'t)(t-1)\nonumber\\
    &+b_{\ell k}'\exp(-d_{\ell k}'t)(t-1)\big)\Big)\Bigg).
\end{align}
\end{small}%
Let us study the series appearing in the second term on the RHS of the above inequality. Performing the ratio test for each one of them we have
\begin{align}
    &r=\lim_{t\to\infty}\frac{\exp(-v(t+1))t}{\exp(-vt)(t-1)}\nonumber\\
    &=\exp(-v)<1,\quad\forall v>0.
\end{align}
Thus, all series on LHS of \eqref{rec_in5} converge, implying
\begin{align}
    \lim_{i\to\infty}\sup\mathbb{E}\{||\boldsymbol{\lambda}_i(\theta,\theta')-\boldsymbol{\lambda}^c_i(\theta,\theta')||_{\infty}\}\leq y<\infty.
\end{align}
\end{proof}
Now we prove Theorem \ref{public_b_c_t}. We characterize the asymptotic behavior of $\boldsymbol{\mu}^c_{k,i}$ and then we use Lemma \ref{bounded_log_belief_ratios} to characterize the behavior of $\boldsymbol{\mu}_{k,i}$. 
Expanding \eqref{log_12mvb} yields
\begin{align}
\label{eqq}
    &\boldsymbol{\lambda}^c_i(\theta,\theta')=
    \sum^i_{t=1}(A^{\mathsf{T}}_{\infty})^{i-t+1}\boldsymbol{\mathcal{L}}_t(\theta,\theta')\nonumber\\
    &+(A^{\mathsf{T}}_{\infty})^i\lambda_0(\theta,\theta'),\quad \theta\neq\theta'
\end{align}
Since the prior beliefs are uniform (Assumption \ref{prior_beliefs}) the second term in \eqref{eqq} is equal to $0$. 
Then, by adding and subtracting 
$\bar{A}^{\mathsf{T}}_{\infty}$ (defined in \eqref{Per_eq} and by dividing by $i$ and taking the limit as $i\to\infty$ \eqref{eqq} yields
\begin{align}
&\lim_{i\to\infty}\hspace{-1mm}\frac{1}{i}\boldsymbol{\lambda}^c_i(\theta,\theta')\hspace{-1mm}=\hspace{-1mm}\lim_{i\to\infty}\hspace{-1mm}\frac{1}{i}\hspace{-1mm}\sum^i_{t=1}\hspace{-1mm}\left((A^{\mathsf{T}}_{\infty})^{i-t+1}-\bar{A}^{\mathsf{T}}_{\infty}\right)\hspace{-1mm}\boldsymbol{\mathcal{L}}_t(\theta,\theta')\nonumber\\
&+\lim_{i\to\infty}\frac{1}{i}\sum^i_{t=1}\bar{A}^{\mathsf{T}}_{\infty}\boldsymbol{\mathcal{L}}_t(\theta,\theta')
\end{align}
Following the same arguments used in the proof of Lemma $8$ in \cite{bordignon2020socialarxiv} 
we can show that the first term on the RHS of the above expression goes to $0$ a.s. 
Then, from the strong law of large numbers we have that
\begin{align}
    \lim_{i\to\infty}\frac{1}{i}\boldsymbol{\lambda}^c_i(\theta,\theta')
    \overset{\text{a.s.}}\longrightarrow \bar{A}^{\mathsf{T}}_{\infty}
    \mathbb{E}\{\boldsymbol{
    \mathcal{L}}_t(\theta,\theta')\}
\end{align}
From Lemma \ref{A_inf}, the above implies that for every $k\in\bar{\mathcal{N}}_s$
\begin{align}
    &\lim_{i\to\infty}\frac{1}{i}\boldsymbol{\lambda}^c_{k,i}(\theta,\theta')
    \overset{\text{a.s.}}\longrightarrow \sum_{\ell\in\bar{\mathcal{N}}_s}p_s({\ell})
    \mathbb{E}\{\boldsymbol{
    \mathcal{L}}_{\ell,t}(\theta,\theta')\}\nonumber\\
    &=\sum_{\ell\in\bar{\mathcal{N}}_s}p_s({\ell})
    (d_{\ell}(\theta')-d_{\ell}(\theta))
\end{align}
Observing the above, we conclude that for any $\theta\notin\bar{\Theta}^{\star}_s$ and for any $\theta'\in\bar{\Theta}^{\star}_s$ we have
\begin{align}
    &\lim_{i\to\infty}\frac{1}{i}\boldsymbol{\lambda}^c_i(\theta,\theta')
    \overset{\text{a.s.}}\longrightarrow-(C_s(\theta^{\star}))-C_s(\theta))
\end{align}
which is $C_s(\theta^{\star}_s)-C_s(\theta)>0$ from the definition of $\theta^{\star}_s$ in \eqref{confidence_max}. Also, note that $C_s(\theta^{\star})-C_s(\theta)$ is finite due to Assumption \ref{non_emptysupport}.

Since $\frac{\boldsymbol{\lambda}^c_{k,i}(\theta,\theta')}{i}$ converges to a finite negative value, we have that $\boldsymbol{\lambda}^c_{k,i}(\theta,\theta')$ diverges to $-\infty$, which in turn implies that $\boldsymbol{\mu}^c_{k,i}(\theta)\overset{a.s.}\longrightarrow0$ for all $\theta\notin\bar{\Theta}^{\star}_s$.

Then, we have
\begin{align}
    &\mathbb{E}\left\{\left|\left|\frac{\boldsymbol{\lambda}_i(\theta,\theta')}{i}-\frac{\boldsymbol{\lambda}^c_i(\theta,\theta')}{i}\right|\right|_{\infty}\right\}\nonumber\\
    &=\frac{1}{i}\mathbb{E}\{||\boldsymbol{\lambda}_i(\theta,\theta')-\boldsymbol{\lambda}^c_i(\theta,\theta')||_{\infty}\}
\end{align}
From Lemma \ref{bounded_log_belief_ratios} by taking the limit as $i$ goes to $\infty$ we get
\begin{align}
    &\lim_{i\to\infty}\mathbb{E}\left\{\left|\left|\frac{\boldsymbol{\lambda}_i(\theta,\theta')}{i}-\frac{\boldsymbol{\lambda}^c_i(\theta,\theta')}{i}\right|\right|_{\infty}\right\}\nonumber\\
    &=\lim_{i\to\infty}\frac{1}{i}\mathbb{E}\{||\boldsymbol{\lambda}_i(\theta,\theta')-\boldsymbol{\lambda}^c_i(\theta,\theta')||_{\infty}\}\leq\lim_{i\to\infty}\frac{y}{i}=0
\end{align}
which implies that
\begin{align}
    \frac{\boldsymbol{\lambda}_i(\theta,\theta')}{i}\overset{\text{P.}}\longrightarrow\frac{\boldsymbol{\lambda}^c_i(\theta,\theta')}{i}
\end{align}
Then, since
\begin{align}
    \frac{\boldsymbol{\lambda}^c_i(\theta,\theta')}{i}\overset{\text{a.s.}}\longrightarrow-(C_s(\theta^{\star})-C_s(\theta))
\end{align}
we have that
\begin{align}
    \frac{\boldsymbol{\lambda}_i(\theta,\theta')}{i}\overset{\text{P.}}\longrightarrow-(C_s(\theta^{\star})-C_s(\theta)).
\end{align}
This implies
\begin{align}
    \boldsymbol{\mu}_{k,i}(\theta)\overset{\text{P.}}\longrightarrow0,\quad\forall\theta\notin\bar{\Theta}^{\star}_s.
\end{align}
For the part $2)$ of the Theorem, we have that $\boldsymbol{\mu}_{k,i}$ is a probability vector and thus its entries must sum up to $1$. Then, if $\bar{\Theta}^{\star}_s=\{\theta^{(k)}\}$, from the first part of the Theorem we have that  $\boldsymbol{\mu}_{k,i}(\theta)\overset{P.}\longrightarrow0$ for all $\theta\neq\theta^{(k)}$ because $\bar{\Theta}^{\star}_s$ contains only $\theta^{(k)}$ and thus, $\boldsymbol{\mu}_{k,i}(\theta^{(k)})\overset{P.}\longrightarrow1$.
\hfill\qedsymb
\section{Auxiliary Results}
\label{ap_aux}
\begin{Lem}
\label{in_form}
Let two random variables $\boldsymbol{x}_i,\boldsymbol{y}_i$ such that:
\begin{align}
\label{x_i}
    &\mathbb{P}\left(|\boldsymbol{x}_i-x_{\infty}|\geq c_x\exp(-a_xi)\right)\leq d_x\exp(-b_xi)\\
    \label{y_i}
    &\mathbb{P}\left(|\boldsymbol{y}_i-y_{\infty}|\geq c_y\exp(-a_yi)\right)\leq d_y\exp(-b_yi)
\end{align}
for some $a_x,b_x,a_y,b_y>0$ and $x_{\infty},y_{\infty}\in\mathbb{R}$. Then:
\begin{align}
    \label{f2}
    &\hspace{-2mm}\mathbb{P}\left(|\boldsymbol{x}_i\hspace{-1mm}+\hspace{-1mm}\boldsymbol{y}_i\hspace{-1mm}-\hspace{-1mm}x_{\infty}\hspace{-1mm}-\hspace{-1mm}y_{\infty}|\hspace{-1mm}\geq\hspace{-1mm}\bar{c}\exp(-\bar{a}i)\right)
    \leq\bar{d}\exp(-\bar{b}i)\\
    \label{f3}
    &\hspace{-2mm}\mathbb{P}\left(|\boldsymbol{x}_i\boldsymbol{y}_i-x_{\infty}y_{\infty}|\geq\bar{c}\exp(-\bar{a}i)\right)\leq\bar{d}\exp(-\bar{b}i)
\end{align}
for some $\bar{a},\bar{b},\bar{c},\bar{d}>0$.

If further $\boldsymbol{x}_i,x_{\infty}\geq1$, then:
\begin{align}
    \label{f1}
    &\mathbb{P}\left(|\boldsymbol{x}_i^{-1}-x_{\infty}^{-1}|\geq\ \bar{c}\exp(-\bar{a}i)\right)\leq \bar{d}\exp(-\bar{b}i)
\end{align}
for some $\bar{a},\bar{b},\bar{c},\bar{d}>0$.
\end{Lem}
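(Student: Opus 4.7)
The plan is to prove \eqref{f2}, \eqref{f3} and \eqref{f1} by a single template: bound the target quantity pointwise in terms of $|\boldsymbol{x}_i - x_\infty|$ and $|\boldsymbol{y}_i - y_\infty|$ using a triangle inequality or an add-and-subtract decomposition, embed the resulting event in the union of the two individual tail events, and close with a union bound against the hypotheses \eqref{x_i} and \eqref{y_i}. The common choice of exponents will be $\bar a = \min(a_x,a_y)$ and $\bar b = \min(b_x,b_y)$, with $\bar d = d_x+d_y$ absorbing the two branches of the union bound; the constant $\bar c$ is chosen case by case to be large enough that, on the intersection of the two good events, the target deviation is at most $\bar c \exp(-\bar a i)$.

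For the sum \eqref{f2}, the triangle inequality $|\boldsymbol{x}_i+\boldsymbol{y}_i-x_\infty-y_\infty| \leq |\boldsymbol{x}_i-x_\infty|+|\boldsymbol{y}_i-y_\infty|$ shows that the event on the left-hand side of \eqref{f2}, with $\bar c = c_x+c_y$, is contained in the union $\{|\boldsymbol{x}_i-x_\infty| \geq c_x \exp(-\bar a i)\} \cup \{|\boldsymbol{y}_i-y_\infty| \geq c_y \exp(-\bar a i)\}$. Because $\bar a \leq a_x$ and $\bar a \leq a_y$, each of these sub-events is in turn contained in its counterpart from the hypotheses (the threshold has been enlarged), and the union bound applied to \eqref{x_i}, \eqref{y_i} yields the claim with $\bar d = d_x + d_y$ and $\bar b = \min(b_x,b_y)$.

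For the product \eqref{f3}, I would use the decomposition
\begin{equation*}
\boldsymbol{x}_i\boldsymbol{y}_i - x_\infty y_\infty = \boldsymbol{x}_i(\boldsymbol{y}_i - y_\infty) + y_\infty(\boldsymbol{x}_i - x_\infty),
\end{equation*}
and then observe that on the intersection of the complements of the two tail events in \eqref{x_i}, \eqref{y_i} the factor $|\boldsymbol{x}_i|$ is automatically at most $|x_\infty| + c_x$. The resulting pointwise bound is $\bigl((|x_\infty|+c_x)c_y + |y_\infty|c_x\bigr)\exp(-\bar a i)$, which fixes a valid choice of $\bar c$ so that the same union-bound reasoning as in the sum case applies. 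For the reciprocal \eqref{f1}, the assumption $\boldsymbol{x}_i, x_\infty \geq 1$ reduces the algebra to a single line: $|\boldsymbol{x}_i^{-1} - x_\infty^{-1}| = |\boldsymbol{x}_i - x_\infty|/(\boldsymbol{x}_i x_\infty) \leq |\boldsymbol{x}_i - x_\infty|$, so any tail bound on $|\boldsymbol{x}_i - x_\infty|$ transfers verbatim and one may take $(\bar a, \bar b, \bar c, \bar d) = (a_x, b_x, c_x, d_x)$.

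There is no real technical obstacle in any of the three parts; the exercise is a careful accounting of exponential constants. The only subtle point is to make sure the rates $\bar a, \bar b$ are chosen as the slower of the two corresponding rates (so that both individual tail bounds remain usable after shrinking their decay exponents) and to confirm, in the product case, that multiplication by $|\boldsymbol{x}_i|$ only enlarges the prefactor $\bar c$ and not the exponential rate $\bar a$.
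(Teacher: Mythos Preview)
Your proposal is correct and follows essentially the same approach as the paper: triangle inequality plus union bound for \eqref{f2}, an add-and-subtract decomposition for \eqref{f3}, and the identity $|\boldsymbol{x}_i^{-1}-x_\infty^{-1}|=|\boldsymbol{x}_i-x_\infty|/(\boldsymbol{x}_i x_\infty)\le|\boldsymbol{x}_i-x_\infty|$ for \eqref{f1}. The only cosmetic difference is that the paper writes the product as $(\boldsymbol{x}_i-x_\infty)(\boldsymbol{y}_i-y_\infty)+x_\infty(\boldsymbol{y}_i-y_\infty)+y_\infty(\boldsymbol{x}_i-x_\infty)$ rather than $\boldsymbol{x}_i(\boldsymbol{y}_i-y_\infty)+y_\infty(\boldsymbol{x}_i-x_\infty)$, which avoids having to bound $|\boldsymbol{x}_i|$ on the good event but yields the identical constant $\bar c=c_xc_y+|x_\infty|c_y+|y_\infty|c_x$.
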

\begin{proof}
Let us define the following events:
\begin{align}
    &\mathcal{A}\triangleq\{|\boldsymbol{x}_i-x_{\infty}|\geq c_x\exp(-a_xi)\}\\
    &\mathcal{B}\triangleq\{|\boldsymbol{y}_i-y_{\infty}|\geq c_y\exp(-a_yi)\}\\
    &\mathcal{C}(\bar{a},\bar{c})\triangleq\{|\boldsymbol{x}^{-1}_i-x^{-1}_{\infty}|\geq \bar{c}\exp(-\bar{a}i)\}\\
    &\mathcal{D}(\bar{a},\bar{c})\triangleq\{|\boldsymbol{x}_i+\boldsymbol{y}_i-x_{\infty}-y_{\infty}|\geq \bar{c}\exp(-\bar{a}i)\}\\
    &\mathcal{E}(\bar{a},\bar{c})\triangleq\{|\boldsymbol{x}_i\boldsymbol{y}_i-x_{\infty}y_{\infty}|\geq \bar{c}\exp(-\bar{a}i)\}
\end{align}
for some $a_x,a_y,c_x,c_y,\bar{a},\bar{c}>0$.

We prove first \eqref{f1}. We have
\begin{align}
    |\boldsymbol{x}^{-1}_i-x^{-1}_{\infty}|=\frac{|\boldsymbol{x}_i-x_{\infty}|}{|\boldsymbol{x}_ix_{\infty}|}\leq|\boldsymbol{x}_i-x_{\infty}|
\end{align}
because $|\boldsymbol{x}_ix_{\infty}|\geq1$. The above implies
\begin{align}
    \bar{\mathcal{A}}\Rightarrow\bar{\mathcal{C}}(a_x,c_x)
\end{align}
where $\bar{\mathcal{A}}$ stands for the complement of an event $\mathcal{A}$. The above in turn implies
\begin{align}
    &\mathbb{P}(\bar{\mathcal{C}}(a_x,c_x))\geq\mathbb{P}(\bar{\mathcal{A}})\nonumber\\
    &\Leftrightarrow\mathbb{P}(\mathcal{C}(a_x,c_x))\leq\mathbb{P}(\mathcal{A})\overset{\eqref{x_i}}\leq d_x\exp(-b_xi).
\end{align}
The above implies that \eqref{f1} holds for $\boldsymbol{x}_i,x_{\infty}\geq1$ and for $\bar{a}=a_x,\bar{b}=b_x,\bar{c}=c_x,\bar{d}=d_x$.

We move on to prove \eqref{f2}. From the triangle inequality, we have
\begin{align}
\label{v1}
    |\boldsymbol{x}_i+\boldsymbol{y}_i-x_{\infty}-y_{\infty}|\leq|\boldsymbol{x}_i-x_{\infty}|+|\boldsymbol{y}_i-y_{\infty}|
\end{align}
The above implies that
\begin{align}
\label{v2}
    &\bar{\mathcal{A}}\cap\bar{\mathcal{B}}\nonumber\\
    &\Rightarrow |\boldsymbol{x}_i+\boldsymbol{y}_i-x_{\infty}-y_{\infty}|\nonumber\\
    &<c_x\exp(-a_xi)+c_y\exp(-a_yi)\nonumber\\
    &<(c_x+c_y)\exp(-\min\{a_x,a_y\}i).
\end{align}
The above implies
\begin{align}
\label{v3}
    \bar{\mathcal{A}}\cap\bar{\mathcal{B}}\Rightarrow\bar{\mathcal{D}}(-\min\{a_x,a_y\}i,c_x+c_y)
\end{align}
which in turn implies
\begin{align}
\label{v4}
    &\mathbb{P}(\bar{\mathcal{D}}(\min\{a_x,a_y\},c_x+c_y))\geq\mathbb{P}(\bar{\mathcal{A}}\cap\bar{\mathcal{B}})\nonumber\\
    &\Leftrightarrow\mathbb{P}(\mathcal{D}(\min\{a_x,a_y\}i,c_x+c_y))\nonumber\\
    &\overset{(a)}\leq\mathbb{P}(\mathcal{A}\cup \mathcal{B})\overset{(b)}\leq\mathbb{P}(\mathcal{A})+\mathbb{P}(\mathcal{B})\nonumber\\
    &\overset{\eqref{x_i},\eqref{y_i}}\leq d_x\exp(-a_xi)+d_y\exp(-a_yi)
\end{align}
where in $(a)$ De Morgan's law \cite{hurley2014concise} was utilized and in $(b)$ we used the union bound. The above implies that \eqref{f2} holds for $\bar{a}=\min\{a_x,a_y\},\,\bar{b}=\min\{b_x,b_y\},\,\bar{c}=c_x+c_y,\,\bar{d}=d_x+d_y$.

Finally, we prove \eqref{f3}. We have.
\begin{align}
    &|\boldsymbol{x}_i\boldsymbol{y}_i-x_{\infty}y_{\infty}|=\nonumber\\
    &|(\boldsymbol{x}_i-x_{\infty})(\boldsymbol{y}_i-y_{\infty})+x_{\infty}(\boldsymbol{y}_i-y_{\infty})+y_{\infty}(\boldsymbol{x}_i-x_{\infty})|\nonumber\\
    &\leq|\boldsymbol{x}_i-x_{\infty}||\boldsymbol{y}_i-y_{\infty}|\nonumber\\
    &+|x_{\infty}||\boldsymbol{y}_i-y_{\infty}|+|y_{\infty}||\boldsymbol{x}_i-x_{\infty}|
\end{align}
By working in the same way as in \eqref{v3}, \eqref{v4}, from the above we obtain
\begin{align}
    \bar{\mathcal{A}}\cap\bar{\mathcal{B}}\Rightarrow\bar{\mathcal{E}}(\min\{a_x,a_y\},c_xc_y+x_{\infty}c_y+y_{\infty}c_x)
\end{align}
which implies
\begin{align}
    &\mathbb{P}(\bar{\mathcal{E}}(\min\{a_x,a_y\},c_xc_y+x_{\infty}c_y+y_{\infty}c_x))\nonumber\\
    &\geq\mathbb{P}(\bar{\mathcal{A}}\cap\bar{\mathcal{B}})\nonumber\\
    &\Leftrightarrow\mathbb{P}(\mathcal{E}(\min\{a_x,a_y\},c_xc_y+x_{\infty}c_y+y_{\infty}c_x))\nonumber\\
    &\leq\mathbb{P}(\mathcal{A}\cup\mathcal{B})\leq\mathbb{P}(\mathcal{A})+\mathbb{P}(\mathcal{B})\nonumber\\
    &\leq d_x\exp(-a_xi)+d_y\exp(-a_yi)
\end{align}
The above implies that \eqref{f3} holds for $\bar{a}=\min\{a_x,a_y\},\,\bar{b}=\min\{b_x,b_y\},\,\bar{c}=c_xc_y+x_{\infty}c_y+y_{\infty}c_x),\,\bar{d}=d_x+d_y$.
\end{proof}
The following auxiliary Lemma characterizes the evolution of the adaptive weights.
\begin{Lem}
\label{rate_weights} {\bf (Rate of convergence of the combination weights
)}. Under Assumptions \ref{prior_beliefs} and \ref{non_emptysupport} the following hold:
\begin{enumerate}
    \item For every $\theta\notin\Theta^{\star}_k\cap\Theta^{\star}_{\ell}$ and $\ell\in\mathcal{N}^{\star}_k$ the following holds:
\begin{align}
    &\mathbb{P}\left(\boldsymbol{a}_{\ell k,i}(\theta)\geq\exp\left(-c_{\ell k}i\right)\right)\leq\exp\left(-d_{\ell k}i\right)
\end{align}
where
\begin{align}
&c_{\ell k}\triangleq\frac{1}{2}\min_{\theta\notin\Theta^{\star}_k\cap\Theta^{\star}_{\ell}}\sum_{m\in\{k,\ell\}}d_m(\theta)\\
&d_{\ell k}\triangleq\frac{\min_{\theta\notin\Theta^{\star}_k\cap\Theta^{\star}_{\ell}}\sum_{m\in\{k,\ell\}}d^2_m(\theta)}{32(\log\alpha)^2}
\end{align}
\item For every $\theta\in\Theta^{\star}_k\cap\Theta^{\star}_{\ell}$ and for all $k\in\mathcal{N},\ell\in\mathcal{N}^{\star}_k$ $\boldsymbol{a}_{\ell k,i}(\theta)$ the following is true: 
        \begin{align}
        \label{bound_intersectionweights}
     &\mathbb{P}\left(|\boldsymbol{a}_{\ell k,i}(\theta)-a_{\ell k,\infty}(\theta)|\geq \alpha_{\ell k}'\exp(-c_{\ell k}'i)\right)\nonumber\\
     &\leq b_{\ell k}'\exp(-d_{\ell k}'i)
        \end{align}
        for some $\alpha_{\ell k}',c_{\ell k}',b_{\ell k}',d_{\ell k}'>0$ and for all $i\geq1$.
\end{enumerate}
\end{Lem}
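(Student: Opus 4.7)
The plan is to adapt the bounded-differences argument used in Proposition \ref{rate_localbeliefs}, but now applied to the \emph{joint} log-belief ratio of the two agents $k$ and $\ell$; for Part 2, I would combine the resulting exponential tail bound on local beliefs with the algebraic propagation rules of Lemma \ref{in_form}.

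For Part 1, I would start from the identity $\boldsymbol{a}_{\ell k,i}(\theta)=\boldsymbol{\pi}_{k,i}(\theta)\boldsymbol{\pi}_{\ell,i}(\theta)/\boldsymbol{\sigma}_{k,i}$. Since $\boldsymbol{\sigma}_{k,i}\geq 1$ and $\boldsymbol{\pi}_{m,i}(\theta^{(m)})\leq 1$, we obtain $\boldsymbol{a}_{\ell k,i}(\theta)\leq\exp(\bar{\boldsymbol{\lambda}}_{k,i}(\theta)+\bar{\boldsymbol{\lambda}}_{\ell,i}(\theta))$, with $\bar{\boldsymbol{\lambda}}_{m,i}(\theta)$ as in the proof of Proposition \ref{rate_localbeliefs}. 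Because the states $\boldsymbol{\theta}^{(k)},\boldsymbol{\theta}^{(\ell)}$ are independent and the observations are conditionally i.i.d., the sum $\bar{\boldsymbol{\lambda}}_{k,i}(\theta)+\bar{\boldsymbol{\lambda}}_{\ell,i}(\theta)$ is a function of $2i$ independent random variables, each with bounded-differences constant $2\log(1/\alpha)$ by the argument of \eqref{McD_bound1}. Its mean equals $-i(d_k(\theta)+d_\ell(\theta))$, which is strictly negative whenever $\theta\notin\Theta_k^{\star}\cap\Theta_\ell^{\star}$ because at least one summand is positive. Applying McDiarmid \eqref{McD} with $\epsilon=\tfrac{i}{2}(d_k(\theta)+d_\ell(\theta))$, taking the minimum over $\theta$, and finally using $(d_k+d_\ell)^2\geq d_k^2+d_\ell^2$ (together with a further factor of two) yields the stated thresholds $c_{\ell k}$ and $d_{\ell k}$.

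For Part 2, where $\theta\in\Theta_k^{\star}\cap\Theta_\ell^{\star}$, I would first exploit \eqref{equal_beliefs}: under uniform priors, $\boldsymbol{\pi}_{k,i}(\theta)=(1-\sum_{\theta'\notin\Theta_k^{\star}}\boldsymbol{\pi}_{k,i}(\theta'))/|\Theta_k^{\star}|$ for every $\theta\in\Theta_k^{\star}$. A union bound over the finite set $\Theta\setminus\Theta_k^{\star}$, combined with Proposition \ref{rate_localbeliefs}, then gives an exponential-rate bound of the form $\mathbb{P}(|\boldsymbol{\pi}_{k,i}(\theta)-1/|\Theta_k^{\star}||\geq c\exp(-ai))\leq d\exp(-bi)$ for suitable positive constants, and similarly for agent $\ell$. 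Using Lemma \ref{in_form}\eqref{f3} I propagate this to an exponential rate on the product $\boldsymbol{\pi}_{k,i}(\theta)\boldsymbol{\pi}_{\ell,i}(\theta)\to 1/(|\Theta_k^{\star}||\Theta_\ell^{\star}|)$, and using \eqref{f2} iteratively on the finite sum \eqref{normalizing_factor} I propagate it to $\boldsymbol{\sigma}_{k,i}\to\sigma_{k,\infty}$. Since $\boldsymbol{\sigma}_{k,i}\geq 1$ and $\sigma_{k,\infty}\geq 1$, Lemma \ref{in_form}\eqref{f1} applies to the reciprocal, and one final application of \eqref{f3} to the product $\boldsymbol{\pi}_{k,i}(\theta)\boldsymbol{\pi}_{\ell,i}(\theta)\cdot\boldsymbol{\sigma}_{k,i}^{-1}$ yields \eqref{bound_intersectionweights}.

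The main obstacle is in Part 1: correctly bookkeeping the McDiarmid constants across the two independent observation chains and verifying that the variance-type quantity $\sum_t\rho_t^2=8i(\log\alpha)^2$ combines with $\epsilon^2=i^2(d_k+d_\ell)^2/4$ in exactly the right way to recover the stated constants $c_{\ell k}$ and $d_{\ell k}$. Part 2, by contrast, is essentially algebraic: the observational-equivalence identity \eqref{equal_beliefs} does the bulk of the work by pinning down the limiting value, and Lemma \ref{in_form} is tailor-made for turning exponential convergence of the ingredients into exponential convergence of their sums, products, and reciprocals, so each step is mechanical once the rates from Proposition \ref{rate_localbeliefs} are in hand.
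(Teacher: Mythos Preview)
Your proposal is correct and follows essentially the same route as the paper: for Part~1, bound $\boldsymbol{a}_{\ell k,i}(\theta)$ by the product of local belief ratios via $\boldsymbol{\sigma}_{k,i}\geq 1$ and apply McDiarmid to the joint log-ratio; for Part~2, combine the equal-beliefs identity \eqref{equal_beliefs} and Proposition~\ref{rate_localbeliefs} with a union bound, then propagate through products, sums, and reciprocals using Lemma~\ref{in_form}. The only cosmetic difference is that the paper groups $(\boldsymbol{\zeta}_{k,t},\boldsymbol{\zeta}_{\ell,t})$ into $i$ paired variables with $\rho_t=4\log(1/\alpha)$ instead of your $2i$ separate variables with $\rho_t=2\log(1/\alpha)$, which is why your McDiarmid computation naturally yields a $16$ in the denominator that must then be loosened to the stated $32$.
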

\begin{proof}We will prove first part $1)$ of the Lemma. Since $\boldsymbol{\sigma}_{k,i}\geq1,\quad\forall i\geq1,k\in\mathcal{N},\ell\in\mathcal{N}^{\star}_k$, for all $\theta\notin\Theta^{\star}_k\cap\Theta^{\star}_{\ell}$ we have from \eqref{combination_weights}:
\begin{align}
    \boldsymbol{a}_{\ell k,i}(\theta)\leq\boldsymbol{\pi}_{k,i}(\theta)\boldsymbol{\pi}_{\ell,i}(\theta)\leq\min\{\boldsymbol{\pi}_{k,i}(\theta),\boldsymbol{\pi}_{\ell,i}(\theta)\}
\end{align}
where the last inequality follows from the fact that $0\leq\boldsymbol{\pi}_{k,i}(\theta),\boldsymbol{\pi}_{\ell,i}(\theta)\leq1$. Then, from \eqref{weights_definition} we have
\begin{align}
&\boldsymbol{a}_{\ell k,i}=\sum_{\theta\in\Theta}\boldsymbol{a}_{\ell k,i}(\theta)\leq\sum_{\theta\in\Theta}\boldsymbol{\pi}_{k,i}(\theta)\boldsymbol{\pi}_{\ell,i}(\theta)
\end{align}
Let us denote
\begin{align}
    \widehat{\boldsymbol{\lambda}}_{k,i}(\theta)\triangleq\log\frac{\boldsymbol{\pi}_{k,i}(\theta)\boldsymbol{\pi}_{\ell,i}(\theta)}{\boldsymbol{\pi}_{k,i}(\theta^{(k)})\boldsymbol{\pi}_{\ell,i}(\theta^{(\ell)})},\quad\theta\notin\Theta^{\star}_k\cap\Theta^{\star}_{\ell}.
\end{align}
Now, we follow the same rationale as in the proof of Proposition \ref{rate_localbeliefs} and as in \cite{nedic2017fast}. Using \eqref{HMM_1} we have
\begin{align}
\label{llrb_tel2}
    &\widehat{\boldsymbol{\lambda}}_i(\theta)=\sum^i_{t=1}\log\frac{L_k(\boldsymbol{\zeta}_{k,i}|\theta)}{L_k(\boldsymbol{\zeta}_{k,i}|\theta^{(k)})}+\sum^i_{t=1}\log\frac{L_{\ell}(\boldsymbol{\zeta}_{\ell,i}|\theta)}{L_{\ell}(\boldsymbol{\zeta}_{\ell,i}|\theta^{(\ell)})}\nonumber\\
    &+\lambda_{k,0}(\theta)+\lambda_{\ell,0}(\theta)
\end{align}
where the last two terms $\lambda_{k,0}(\theta)=\lambda_{\ell,0}(\theta)=0$ due to Assumption \ref{prior_beliefs}. Taking expectations in \eqref{llrb_tel2} we have
\begin{align}
\label{fas}
    &\mathbb{E}\{\widehat{\boldsymbol{\lambda}}_i(\theta)\}=-i\sum_{m\in\{k,\ell\}}d_m(\theta)\nonumber\\
    &\leq-i\min_{\theta\notin\Theta^{\star}_k\cap\Theta^{\star}_{\ell}}\sum_{m\in\{k,\ell\}}d_m(\theta).
\end{align}
Next,  by Assumption \ref{non_emptysupport}, 
we have for all $\theta\notin\Theta^{\star}_k\cap\Theta^{\star}_{\ell}$
\begin{align}
\label{KL_bound2}
    2\log\alpha\leq\sum_{m\in\{k,\ell\}}d_m(\theta)\leq2\log \frac{1}{\alpha}.
\end{align}
Consider the sequence of random variables $\boldsymbol{\zeta}_{1:i}=(\boldsymbol{\zeta}_{k,1},\boldsymbol{\zeta}_{\ell,1},\ldots,\boldsymbol{\zeta}_{k,i},\boldsymbol{\zeta}_{\ell,i})$. We want to establish that $\widehat{\boldsymbol{\lambda}}_{k,i}(\theta)$, which is a function of $\boldsymbol{\zeta}_{1:i}$, has bounded differences. For all $t$ such that $1\leq t\leq i$ we have
\begin{align}
\label{McD_bound}
    &\max_{\zeta_{k,t},\zeta_{\ell,t}}\widehat{\boldsymbol{\lambda}}_i(\theta)-\min_{\zeta_{k,t},\zeta_{\ell,t}}\widehat{\boldsymbol{\lambda}}_i(\theta)\nonumber\\
    &=\max_{\zeta_{k,t},\zeta_{\ell,t}}\left\{\log\frac{L_k(\zeta_{k,t}|\theta)}{L_k(\zeta_{k,t}|\theta^{(k)})}+\log\frac{L_{\ell}(\zeta_{\ell,t}|\theta)}{L_{\ell}(\zeta_{\ell,t}|\theta^{(\ell)})}\right\}\nonumber\\
    &-\min_{\zeta_{k,t},\zeta_{\ell,t}}\left\{\log\frac{L_k(\zeta_{k,t}|\theta)}{L_k(\zeta_{k,t}|\theta^{(k)})}+\log\frac{L_{\ell}(\zeta_{\ell,t}|\theta)}{L_{\ell}(\zeta_{\ell,t}|\theta^{(\ell)})}\right\}\nonumber\\
    &\leq2\log\frac{1}{\alpha}-2\log\alpha=4\log\frac{1}{\alpha}
\end{align}
where we utilized \eqref{KL_bound2}. Thus, $\widehat{\boldsymbol{\lambda}}_i(\theta)$ has bounded differences and as a result, we can apply McDiarmid's inequality. Then, by utilizing \eqref{McD} for $\rho_t=4\log\frac{1}{\alpha}$ (the bound from \eqref{McD_bound}) we obtain
\begin{align}
\label{bound_llrb2}
    \mathbb{P}\Big(\widehat{\boldsymbol{\lambda}}_i(\theta)-\mathbb{E}\{\widehat{\boldsymbol{\lambda}}_i(\theta)\}\geq\epsilon\Big)\leq\exp\Big(-\frac{2\epsilon^2}{16i(\log\frac{1}{\alpha})^2}\Big).
\end{align}
Then, since $\boldsymbol{\pi}_{k,i}(\theta),\boldsymbol{\pi}_{\ell,i}(\theta)\in(0,1)$ for all $\theta\notin\Theta^{\star}_k\cap\Theta^{\star}_{\ell}$ we have
\begin{align}
\label{in12}
    &\boldsymbol{a}_{\ell k,i}(\theta)=\boldsymbol{\pi}_{k,i}(\theta)\boldsymbol{\pi}_{\ell,i}(\theta)\leq\frac{\boldsymbol{\pi}_{k,i}(\theta)\boldsymbol{\pi}_{\ell,i}(\theta)}{\boldsymbol{\pi}_{k,i}(\theta^{(k)})\boldsymbol{\pi}_{\ell,i}(\theta^{(\ell)})}\nonumber\\
    &=\exp\Big(\widehat{\boldsymbol{\lambda}}_i(\theta)\Big).
\end{align}
Thus, for an arbitrary $\varepsilon$ we have
\begin{align}
\label{bound_belief2}
    &\mathbb{P}\Big(\boldsymbol{a}_{\ell k,i}(\theta)\geq\exp(\varepsilon)\Big)\nonumber\\
    &\leq\mathbb{P}\Big(\exp(\widehat{\boldsymbol{\lambda}}_i(\theta))\geq\exp(\varepsilon)\Big)=\mathbb{P}\Big(\widehat{\boldsymbol{\lambda}}_i(\theta)\geq\varepsilon\Big)\nonumber\\
    &\leq\mathbb{P}\left(\widehat{\boldsymbol{\lambda}}_i(\theta)-\mathbb{E}\{\widehat{\boldsymbol{\lambda}}_i\}
    \overset{\eqref{fas}}\geq\varepsilon+i\min_{\theta\notin\Theta^{\star}_k\cap\Theta^{\star}_{\ell}}\sum_{m\in\{k,\ell\}}d_m(\theta)\right)
\end{align}
Utilizing \eqref{bound_llrb2}, \eqref{bound_belief2} and setting 
\begin{align}
\varepsilon=-\frac{i}{2}\left(\min_{\theta\notin\Theta^{\star}_k\cap\Theta^{\star}_{\ell}}\sum_{m\in\{k,\ell\}}d_m(\theta)\right)
\end{align}
we have
\begin{small}
\begin{align}
\label{concentration_inequality2}
    &\mathbb{P}\left(\boldsymbol{a}_{\ell k,i}(\theta)\geq\exp(\varepsilon)\right)\nonumber\\
    &\leq\mathbb{P}\left(\widehat{\boldsymbol{\lambda}}_i(\theta)-\mathbb{E}\{\widehat{\boldsymbol{\lambda}}_i(\theta)\}\geq\frac{i}{2}\min_{\theta\notin\Theta^{\star}_k\cap\Theta^{\star}_{\ell}}\sum_{m\in\{k,\ell\}}d_m(\theta)\right)\nonumber\\
    &\leq\exp\left(-\frac{\frac{1}{2}\left(i\min_{\theta\notin\Theta^{\star}_k\cap\Theta^{\star}_{\ell}}\sum_{m\in\{k,\ell\}}d_m(\theta)\right)^2}{16i(\log\frac{1}{\alpha})^2}\right)
    \nonumber\\
    &=\exp\left(-\frac{\min_{\theta\notin\Theta^{\star}_k\cap\Theta^{\star}_{\ell}}\sum_{m\in\{k,\ell\}}d^2_m(\theta)}{32(\log\alpha)^2}i\right)
\end{align}
\end{small}%
Now we will prove part $2)$ of Lemma. From Proposition \ref{rate_localbeliefs} we have that 
\begin{align}
    \mathbb{P}\left(\boldsymbol{\pi}_{k,i}(\theta)\geq \exp(-x_ki)\right)\leq\exp(-y_ki)
\end{align}
for all $\theta\notin\Theta^{\star}_k$ for some $x_k,y_k>0$. 
Consider the events $Y_{\theta}=\{\boldsymbol{\pi}_{k,i}(\theta)<\exp(-x_ki)\}$ and $Y=\{\sum_{\theta\notin\Theta^{\star}_k}\boldsymbol{\pi}_{ k,i}(\theta)\geq\sum_{\theta\notin\Theta}\exp(-x_ki)=|\Theta\setminus\Theta^{\star}_k|\exp(-x_ki)\}$. Then, we have
\begin{align}
\label{in_sum1}
    \underset{\theta\in\Theta}\cap Y_{\theta}\Rightarrow\neg Y
\end{align}
which implies
\begin{align}
\label{in_sum2}
    Y\Rightarrow\underset{\theta\in\Theta}\cup\neg Y_{\theta}
\end{align}
where $\neg$ stands for negation. 
Then, \eqref{in_sum2} implies
\begin{align}
\label{cin_sum}
    &\mathbb{P}\left(\sum_{\theta\notin\Theta^{\star}_k}\boldsymbol{\pi}_k(\theta)\geq |\Theta\setminus\Theta^{\star}_k|\exp(-x_ki)\right)\nonumber\\
    &\leq\mathbb{P}\left(\cup_{\theta\notin\Theta^{\star}_k}\{\boldsymbol{\pi}_{k,i}(\theta)\geq \exp(-x_ki)\}\right)\nonumber\\
    &\leq|\Theta\setminus\Theta^{\star}_k|\exp(-y_ki)
\end{align}
where in the last inequality the union bound was utilized. Then, since $\boldsymbol{\pi}_{k,i}$ is a probability vector, we have
\begin{align}
\label{pr_vec}
    \sum_{\theta\in\Theta}\boldsymbol{\pi}_{k,i}(\theta)=\sum_{\theta\in\Theta^{\star}_k}\boldsymbol{\pi}_{k,i}(\theta)+\sum_{\theta\notin\Theta^{\star}_k}\boldsymbol{\pi}_{k,i}(\theta)=1.
\end{align}
Then, from \eqref{cin_sum} and \eqref{pr_vec} we obtain
\begin{align}
\label{cin_sumset}
    &\mathbb{P}\left(\sum_{\theta\in\Theta^{\star}_k}\boldsymbol{\pi}_{k,i}(\theta)\leq1-|\Theta\setminus\Theta^{\star}_k|\exp(-x_ki)\right)\nonumber\\
    &\leq|\Theta\setminus\Theta^{\star}_k|\exp(-y_ki)
\end{align}
Utilizing \eqref{equal_beliefs}, \eqref{cin_sumset} yields
\begin{align}
\label{in_auxil1}
    &\mathbb{P}\left(\boldsymbol{\pi}_{k,i}(\theta)\leq\frac{1}{|\Theta^{\star}_k|}-\frac{|\Theta\setminus\Theta^{\star}_k|}{|\Theta^{\star}_k|}\exp(-x_ki)\right)\nonumber\\
    &\leq|\Theta\setminus\Theta^{\star}_k|\exp(-y_ki),\quad\theta\in\Theta^{\star}_k.
\end{align}
Also, due to \eqref{equal_beliefs} and the fact that $\boldsymbol{\pi}_{k,i}$ is a probability vector, the probability mass placed on every $\theta\in\Theta^{\star}_k$ cannot exceed $1/|\Theta^{\star}_k|$. Thus, the following is true:
\begin{align}
\label{in_auxil2}
    &\mathbb{P}\left(\boldsymbol{\pi}_{k,i}(\theta)>\frac{1}{|\Theta^{\star}_k|}+\frac{|\Theta\setminus\Theta^{\star}_k|}{|\Theta^{\star}_k|}\exp(-x_ki)\right)\nonumber\\
    &=0\leq|\Theta^{\star}_k|\exp(-y_ki),\quad\theta\in\Theta^{\star}_k.
\end{align}
Combining \eqref{in_auxil1} and \eqref{in_auxil2}, we obtain
\begin{align}
    &\mathbb{P}\left(|\boldsymbol{\pi}_{k,i}(\theta)-\frac{1}{|\Theta^{\star}_k|}|\geq\frac{|\Theta\setminus\Theta^{\star}_k|}{|\Theta^{\star}_k|}\exp(-x_ki)\right)\nonumber\\
    &\leq|\Theta\setminus\Theta^{\star}_k|\exp(-y_ki),\quad\theta\in\Theta^{\star}_k,
\end{align}
for all $k\in\mathcal{N}$. Also, for all $\theta\notin\Theta^{\star}_k$ we can write ({\em see} Proposition \ref{rate_localbeliefs}):
\begin{align}
    \mathbb{P}\left(|\boldsymbol{\pi}_{k,i}(\theta)-0|\geq \exp(-x_ki)\right)\leq\exp(-y_ki)
\end{align}
Now, we can use the properties shown in Lemma \ref{in_form} ({\em see} Appendix \ref{ap_aux}) to show that $\boldsymbol{a}_{\ell k,i}(\theta)$ converges exponentially fast to $a_{\ell k,\infty}(\theta)$ for all $\theta$. From \eqref{combination_weights}, we see that the numerator of $\boldsymbol{a}_{\ell k,i}(\theta)$ converges exponentially fast to $\pi_{k,\infty}(\theta)\pi_{\ell,\infty}(\theta)$ from \eqref{f3}. The denominator (i.e., $\boldsymbol{\sigma}_{k,i}$) also converges exponentially fast because we can repeatedly apply \eqref{f2} and \eqref{f3}, as $\boldsymbol{\sigma}_{k,i}$ is comprised of products and sums of random variables satisfying \eqref{f2}, \eqref{f3}.  The inverse of the denominator also converges exponentially fast due to \eqref{f1} (note that $\boldsymbol{\sigma}_{k,i}\geq1$ for all $i\geq1$). Finally, we apply \eqref{f3} by setting $\boldsymbol{x}_i=\boldsymbol{\pi}_{k,i}(\theta)\boldsymbol{\pi}_{\ell,i}(\theta)$ and $\boldsymbol{y}_i= \boldsymbol{\sigma}^{-1}_{k,i}$. 
Thus, we get the statement of the Lemma.
\end{proof}
\bibliographystyle{IEEEtran}
\bibliography{references}
\end{document}